\documentclass[journal,onecolumn]{IEEEtran}
%

\usepackage{amssymb}
\usepackage{amsmath}
\usepackage{longtable}
\newtheorem{theorem}{Theorem}[section]
\newtheorem{lemma}[theorem]{Lemma}

\newtheorem{corollary}[theorem]{Corollary}

\newenvironment{proof}[1][Proof]{\begin{trivlist}
\item[\hskip \labelsep {\bfseries #1}]}{\end{trivlist}}
\newenvironment{definition}[1][Definition]{\begin{trivlist}
\item[\hskip \labelsep {\bfseries #1}]}{\end{trivlist}}

\newenvironment{remark}[1]
[Remark]{\begin{trivlist}
\item[\hskip \labelsep {\bfseries #1}]}{\end{trivlist}}

\makeatletter

\newcommand{\Rmnum}[1]{\expandafter\@slowromancap\romannumeral #1@}
\makeatother

\newcommand{\fq}{{\mathbb F}_{2^{2m}}}
\newcommand{\fqm}{{\mathbb F}_{2^{2m}}^{\times}}

\newcommand{\Tr}{{\rm {Tr}}}

\newcommand{\ftau}{{\mathbb F}_{2}^{\tau}}

\ifCLASSINFOpdf

\else

\fi

\hyphenation{op-tical net-works semi-conduc-tor}

\begin{document}
%
\title{Constructing   bent functions and  bent idempotents of any possible algebraic degrees}

\author{Chunming~Tang,
 Yanfeng~Qi, Zhengchun Zhou, Cuiling Fan
\thanks{C. Tang is with School of Mathematics and Information, China West Normal University, Sichuan Nanchong, 637002, China. e-mail: tangchunmingmath@163.com
}

\thanks{Y. Qi is with School of Science, Hangzhou Dianzi University, Hangzhou, Zhejiang, 310018, China.
e-mail: qiyanfeng07@163.com.
}
\thanks{
Z. Zhou is with the School of Mathematics, Southwest Jiaotong University, Chengdu, 610031, China. e-mail:
zzc@swjtu.edu.cn.}
\thanks{
C. Fan is with the School of Mathematics, Southwest Jiaotong University, Chengdu, 610031, China. e-mail:
fcl@swjtu.edu.cn.}
}


\maketitle

\begin{abstract}
Bent functions as optimal combinatorial objects are difficult to characterize and construct.
In the literature,  bent idempotents are a special class of bent functions and few constructions have been presented, which are restricted by the degree of  finite fields  and have algebraic degree no more than 4. In this paper, several new infinite families of bent functions are obtained by adding the the algebraic combination of linear functions to some known bent functions and their duals are calculated. These bent functions contain some previous work on infinite families of bent functions by Mesnager \cite{M2014} and Xu et al. \cite{XCX2015}. Further,
infinite families of bent
idempotents
of any possible algebraic degree
are constructed from  any
quadratic bent idempotent. To our knowledge,
it is the first univariate representation  construction of
infinite families  of bent idempotents over  $\mathbb{F}_{2^{2m}}$ of
algebraic degree between 2 and $m$, which solves the  open problem on bent idempotents proposed by  Carlet \cite{C2014}. And an  infinite
family  of anti-self-dual bent functions
are obtained. The sum of three  anti-self-dual bent functions in such a family is also anti-self-dual bent and belongs to this family. This solves the open problem proposed by Mesnager  \cite{M2014}.
\end{abstract}

\begin{IEEEkeywords}
Bent functions, rotation symmetric functions, bent idempotents, Walsh transform, algebraic degree
\end{IEEEkeywords}

%
\IEEEpeerreviewmaketitle

\section{Introduction}
Boolean bent functions introduced by Rothaus \cite{R1976} in 1976 are an interesting combinatorial object with the maximum
Hamming distance to the set of all affine functions. Such functions have been extensively studied because of
their important applications in cryptograph (stream ciphers \cite{C2010}), sequences \cite{OSW1982},
graph theory \cite{PTF2010}, coding theory ( Reed-Muller codes \cite{CHLL1997}, two-weight and three-weight linear codes \cite{CK1986,Ding2015}), and association schemes \cite{PTFL2011}.
A complete classification of bent functions is still elusive. Further, not only
their characterization, but also
their generation are challenging problems.
Much work on bent functions are devoted to
the construction of bent functions
\cite{CCK2008,C1994,C1996,C2010,CM2011,CG2008,
CK2008,D1974,DD2004,DLCCFG2006,G1968,
L2006,LK2006,LHTK2013,M1973,M2009,M2010,M2011-1,
M2011-2,M2014,MF2013,YG2006}.

Idempotents introduced by Filiol and Fontaine in \cite{F1999,FF1998} are Boolean functions over
$\mathbb{F}_{2^n}$ such that for any
$x\in \mathbb{F}_{2^n}$,
$f(x)=f(x^2)$. Rotation symmetric Boolean functions, which was also introduced by Filiol and Fontaine under the name of idempotent functions and studied by Pieprzyk and Qu \cite{PQ1999}, are invariant under circular translation of indices. Due to less space to be stored and
allowing faster computation of the Walsh transform, they are of great interest.
They can be obtained from
idempotents (and vice versa) through the choice of a normal basis of $\mathbb{F}_{2^n}$. Characterizing and
constructing idempotent bent functions and rotation symmetric bent functions are difficult and have theoretical and practical interest.   In the literature,
few constructions of bent idempotents have been presented, which are restricted by
the degree of finite fields and have algebraic degree no more than 4. Carlet \cite{C2014}
introduced an open problem: how to construct classes of
bent idempotents over $\mathbb{F}_{2^{2m}}$ of algebraic degree between 5 and $m$.

Bent functions always occurs in pairs.
In fact, the dual of  a given bent function
is also a bent function. Generally, the calculation of the dual of a given bent function is difficult. As a subclass of bent functions, self-dual bent functions (resp. anti-self-dual bent functions )
coincide with their dual (resp. have a
very simple way with their dual). They are
interesting and can be used to construct new bent functions. Mesnager \cite{M2014} proposed
an open problem: find three anti-self-dual
bent functions $f_1$, $f_2$, and
$f_3$ over $\mathbb{F}_{2^{2m}}$ such that
$f_1+f_2+f_3$ is anti-self-dual bent.

In this paper, we consider bent functions of the form
$$
f(x)=g(x)+F(\Tr_1^n(u_1x),\Tr_1^n(u_2x)
,\cdots, \Tr_1^n(u_\tau x)),
$$
where $g(x)$ is some known bent function,
$n=2m$,
$\tau$ is a positive integer such that
$1\leq \tau \leq m$,
$u_1,u_2,\cdots, u_\tau\in \fqm$,
 and
$F(X_1,X_2,\cdots,X_\tau)
\in \mathbb{F}_2[X_1,X_2,\cdots,X_\tau]
$. Mesnager \cite{M2014} studied the case
that $F(X_1,X_2)=
X_1X_2$. Xu et al. \cite{XCX2015} studied
the case $F(X_1,X_2,X_3)=
X_1X_2X_3$. We consider the general case.
First, from some known bent functions, we present several new infinite
families of bent functions by calculating
their duals. Second, from any quadratic idempotent bent function, we get  infinite families of bent idempotents over $\fq$ of  algebraic degree between 2 and $m$. To our knowledge, it is the first univariate representation  construction of
infinite families  of bent idempotents over  $\mathbb{F}_{2^{2m}}$ of
algebraic degree between 2 and $m$, which solves
the open problem proposed by Carlet \cite{C2014}.
Third, we present
an infinite family of anti-self-dual bent functions over $\fq$ of algebraic degree
between 2 and $m-1$. If
$f_1$, $f_2$, and $f_3$ are anti-self-dual bent functions of this family, then
$f_1+f_2+f_3$ is also anti-self-dual bent and belongs to this family. Hence, we solve the
open problem proposed by Mesnager \cite{M2014}.

The rest of the paper is organized as follows: Section \Rmnum{2} introduces some basic
notations and results on Boolean bent functions, bent idempotents, and
self-dual bent functions (anti-self-dual bent
functions). Section \Rmnum{3}, we present our concrete construction of several new infinite  families of bent functions. Concretely, in Section
\ref{subA}, a new infinite family of bent functions from Kasami bent functions
is obtained, which gives
an infinite family of bent idempotents
and an infinite family of anti-self-dual
bent functions.
We obtain an  infinite family of    bent  functions and bent idempotents  from any quadratic idempotent bent function  in Section \ref{quadratic},
a new infinite family of bent functions from Gold-like monomial functions in Section \ref{subB}, a  new infinite family of bent functions from Niho exponents in Section \ref{subC}, and a new infinite family of bent functions from the
Maiorana-Mcfarland class of bent functions in Section \ref{subD}.
Section \Rmnum{4} makes a conclusion.

\section{Preliminaries}

Let $n=2m$ be a positive integer,
$\fq$ denote the finite field with $2^{2m}$
elements, and $\fqm=\fq\backslash \{0\}$.
For any positive integer $k|n$,
the trace function from $\mathbb{F}_{2^n}$
to $\mathbb{F}_{2^k}$, denoted by
$\Tr_k^n$, is the mapping defined as :
$$
\Tr_k^n(x)=x+x^{2^k}+
 x^{2^{2k}}+\cdots+ x^{2^{n-k}}.
$$
When $k=1$, $\Tr_1^n(x)
=\sum_{i=0}^{n-1}x^{2^i}$ is called the absolute trace function. The trace function satisfies that $
\Tr_1^n(x)=\Tr_1^n(x^2)$ and $\Tr_1^n(x)=\Tr_1^k(\Tr_k^n(x))$
for any integer $k|n$.
Let $\mathbb{Z}$ be the integer set and
$\mathbb{F}_2[X_1,X_2,\cdots,X_\tau]$
the multivariate polynomial ring
with variables $X_1,X_2,\cdots,X_\tau$ for any positive integer $\tau$.

A Boolean function $f(x)$ defined on $\mathbb{F}_{2^n}$ is a mapping from $\mathbb{F}_{2^n}$ to $\mathbb{F}_2$. Every nonzero $f(x)$ has a unique univariate representation
of the form
$$
f(x)=\sum_{j=0}^{2^n-1}a_{j}x^j.
$$
where $a_{0}, a_{2^n-1}\in\mathbb{F}_{2}$
 and $a_{2j}=a_{j}^2$, and
$j \in \mathbb{Z}/(2^n-1)
\mathbb{Z}$.
The univariate representation can be written as a trace representation of the form
$$
f(x)=\sum_{j\in \Gamma_{n}}
\mathrm{Tr}_{1}^{o(j)}(a_{j}x^j)+
a_{2^n-1}x^{2^n-1}
$$
where
\begin{itemize}
\item $\Gamma_n$ is the set of integers obtained by choosing one element in each cyclotomic class of 2 modulo
    $2^n-1$ ($j$ is often chosen as the smallest element in its cyclotomic class, called the coset leader of the class);
\item $o(j)$ is the size of the cyclotomic coset of 2 modulo $2^n-1$ containing $j$;
\item $a_{j}\in\mathbb{F}_{2^{o(j)}}$;
\item $a_{2^n-1}=0$ or 1.
\end{itemize}
We also have $f(x)=\Tr_1^n(P(x))$ for some polynomial $P(x)$, which is not a unique
representation.
\begin{definition}
Let $f(x)$ be a Boolean function defined on
$\mathbb{F}_{2^n}$. $f(x)$ is called an
idempotent if
$$
f(x)=f(x^2), \forall x\in \mathbb{F}_{2^n}.
$$
\end{definition}

Every
$x\in \mathbb{F}_{2^n}$ has a unique representation $x=X_1\zeta_1+
X_2\zeta_2+\cdots+X_n\zeta_n$, where
$X_i\in \mathbb{F}_2$ and
$\zeta_1,\zeta_2,\cdots,\zeta_n$ is a
basis of $\mathbb{F}_{2^n}$ over
$\mathbb{F}_2$. Hence, $x$ is denoted by
a vector $(X_1,X_2,\cdots,X_n)$.
A Boolean function $f(x)$ can also be written as $f(X_1,X_2,\cdots,X_n)
:\mathbb{F}_2^n\longmapsto \mathbb{F}_2$.
A Boolean function defined over $\mathbb{F}_{2}^n$ is often
represented by the algebraic normal form (ANF):
\begin{equation}\label{anf}
f(X_{1}, \cdots, X_{n})=
\sum_{I\subseteq \{1,\cdots,n\}}
a_{I}(\prod_{i\in I}X_{i}),~~~ a_{I}\in \mathbb{F}_{2}.
\end{equation}
A polynomial  in $\mathbb{F}_2[X_1,X_2,\cdots,X_n]$
with the form in Equation (\ref{anf}) is
called a reduced polynomial.

The algebraic degree of a Boolean function
$f$ is the global degree of its algebraic
normal form. And the algebraic degree of $f$
can also be expressed as
$max\{wt_2(j):a_j\neq 0\}$, where
$wt_2(j)$ is the Hamming weight of
the binary expression of $j$ and
$a_j$ is in the trace representation or
univariate representation  of $f$. The following lemma is used to determine the algebraic degree of some Boolean functions.
\begin{lemma}\label{degree}
Let $\tau$ be a positive integer such that
$1\leq \tau \leq m$, $u_1,u_2,\cdots, u_\tau\in \mathbb{F}_q^{\times}$
be linearly independent over $\mathbb{F}_2$, and  $F(X_1,X_2,\cdots,X_\tau)$ be a reduced polynomial function in $\mathbb{F}_2[X_1,X_2,\cdots,X_\tau]$ of algebraic degree $d$.
Then the function $$
F(\Tr_1^n(u_1x),\Tr_1^n(u_2x)
,\cdots, \Tr_1^n(u_\tau x))
$$
 has algebraic degree $d$.
\end{lemma}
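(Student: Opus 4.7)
The plan is to reduce the statement to a tautology about the algebraic normal form in a suitable coordinate system, using the non-degeneracy of the absolute trace form.

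First I would extend $u_1,u_2,\ldots,u_\tau$ to an $\mathbb{F}_2$-basis $u_1,\ldots,u_n$ of $\mathbb{F}_{2^n}$; this is possible by the hypothesis of linear independence and because $\tau\leq m\leq n$. Next, since the pairing $(a,b)\mapsto \Tr_1^n(ab)$ is a non-degenerate symmetric $\mathbb{F}_2$-bilinear form on $\mathbb{F}_{2^n}$, it admits a dual basis $v_1,\ldots,v_n$ characterised by $\Tr_1^n(u_iv_j)=\delta_{ij}$ for all $i,j\in\{1,\ldots,n\}$. I would then expand an arbitrary $x\in\mathbb{F}_{2^n}$ in the basis $\{v_j\}$, writing $x=\sum_{j=1}^n X_j v_j$ with $X_j\in\mathbb{F}_2$, so that $(X_1,\ldots,X_n)$ is a valid coordinate system for representing Boolean functions on $\mathbb{F}_{2^n}$ via their algebraic normal form.

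The crucial computation, which is a single line, is that
\[
\Tr_1^n(u_i x)=\sum_{j=1}^n X_j\,\Tr_1^n(u_iv_j)=X_i,\qquad i=1,\ldots,\tau.
\]
Hence the composed function $F(\Tr_1^n(u_1 x),\ldots,\Tr_1^n(u_\tau x))$ equals $F(X_1,\ldots,X_\tau)$, viewed as a polynomial in the coordinates $X_1,\ldots,X_n$ (it simply does not involve $X_{\tau+1},\ldots,X_n$). Because $F$ is already a reduced polynomial of global degree $d$ in $X_1,\ldots,X_\tau$, this is its own algebraic normal form in the $n$ variables $X_1,\ldots,X_n$, so the algebraic degree of the composed Boolean function is exactly $d$.

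The only subtlety, and the one I would present most carefully, is the choice of basis for defining the algebraic degree: the algebraic degree of a Boolean function on $\mathbb{F}_{2^n}$ is independent of the $\mathbb{F}_2$-basis used to coordinatise $\mathbb{F}_{2^n}$ (this is a standard fact, visible from $\mathrm{wt}_2(j)$ in the univariate representation being a basis-free invariant). Thus evaluating the degree in the specific dual basis $\{v_j\}$ is legitimate, and the argument is complete; no real obstacle arises beyond invoking the non-degeneracy of the trace pairing to manufacture the dual basis.
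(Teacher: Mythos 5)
Your proof is correct and is essentially the paper's own argument: the paper extends $u_1,\ldots,u_\tau$ to a basis and uses the nondegenerate linear map $x\mapsto(\Tr_1^n(u_1x),\ldots,\Tr_1^n(u_nx))$ as a coordinate change, which is exactly the coordinate system your dual basis $\{v_j\}$ produces. Both arguments then conclude by the invariance of algebraic degree under invertible linear changes of coordinates, so there is nothing substantively different to report.
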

\begin{proof}
Since $u_1,u_2,\cdots, u_\tau$ are linearly independent, there exist $u_{\tau+1},
\cdots,u_n$ such that $u_1,u_2,\cdots, u_n$ form  a basis of $\mathbb{F}_q$ over $\mathbb{F}_2$. Thus, for any $u\in \mathbb{F}_q$, there exists $c_i\in \mathbb{F}_2$ such that
$u=\sum_{i=1}^nc_iu_i$.
Define a mapping
\begin{align*}
\sigma: \mathbb{F}_{2^n}&\longrightarrow \mathbb{F}_2^n\\
x&\longmapsto (X_1,X_2,\cdots,X_n),
\end{align*}
where $X_i=\Tr_1^n(u_ix)$. If there exists
$x_0\in \mathbb{F}_q$ such that
$$
\Tr_1^n(u_1x_0)=\Tr_1^n(u_2x_0)
=\cdots=\Tr_1^n(u_\tau x_0)=0,
$$
then for any $u\in \mathbb{F}_q$, $\Tr_1^n(ux_0)=\sum_{i=1}^{n}c_i
\Tr_1^n(u_ix_0)
=0$. Hence, $x_0=0$.  The defined
mapping $\sigma$ is nondegenerated.
$F(\Tr_1^n(u_1x),\Tr_1^n(u_2x)
,\cdots, \Tr_1^n(u_\tau x))$ can be seen as
a  function $F(X_1,X_2
,\cdots, X_\tau)$ over $\mathbb{F}_2^{n}$.
The algebraic degree of a Boolean function
under a nondegenerate mapping
keeps unchanged. Hence, $F(\Tr_1^n(u_1x),\Tr_1^n(u_2x)
,\cdots, \Tr_1^n(u_\tau x))$  has algebraic degree $d$.
\end{proof}
\begin{definition}
A Boolean function  or
a multivariate polynomial $f(X_1,X_2,\cdots,
X_n)$ is  rotation symmetric   if it is invariant under cyclic shift:
$$
f(X_n,X_1,X_2,\cdots,
X_{n-1})=f(X_1,X_2,\cdots,
X_n)
$$
\end{definition}
For a Boolean function $f(x)$ and a normal basis $\{u,u^2,\cdots,
u^{2^{n-1}}\}$  of $\mathbb{F}_{2^n}$, the
multivariate representation of $f(x)$:
$$
f(X_1,X_2,\cdots,X_n)=f(\sum_{i=1}^{n}
X_iu^{2^{i-1}})
$$
is rotation symmetric if and only if
$f(x)$ is an idempotent.

The bivariate representation is
based on the identification
$\mathbb{F}_{2^{2m}}
\approx \mathbb{F}_{2^m}
\times \mathbb{F}_{2^m}$ and has the form
$$
f(x)=\sum_{0\leq i,j\leq
2^m-1}a_{i,j}x^iy^j, ~~~a_{i,j}\in
\mathbb{F}_{2^{m}}
$$
The simple representation $f(x,y)=
\Tr_1^m(P(x,y))$ exists but not unique, where $P(x,y)$ ia a polynomial
over $\mathbb{F}_{2^m}$.

The Walsh transform of a Boolean function calculates the correlations between the function and linear Boolean functions.
For the univariate trace representation
over
$\mathbb{F}_{2^n}$, the Walsh transform of
$f$ is
$$
\mathcal{W}_f(\beta)=
\sum_{x\in \mathbb{F}_{2^n}}
(-1)^{f(x)+\Tr_1^n(\beta x)}, ~~~
\beta \in \mathbb{F}_{2^n}.
$$
For  the bivariate trace representation over $\mathbb{F}_{2^m}\times
\mathbb{F}_{2^m}$, the Walsh transform of $f$ is
$$
\mathcal{W}_f(\beta_1,\beta_2)=
\sum_{x,y\in \mathbb{F}_{2^m}}
(-1)^{f(x,y)+\Tr_1^n(\beta_1 x+
\beta_2 y)}.
$$
For the multivariate representation over
$\mathbb{F}_{2}^n$, the Walsh transform of
$f$ is
$$
\mathcal{W}_f(y_1,y_2,\cdots, y_n)=
\sum_{(x_1,x_2,\cdots,x_n)\in \mathbb{F}_{2}^n}
(-1)^{f(x_1,x_2,\cdots,x_n)+
\sum_{i=1}^{n}x_iy_i},~~~~y_i\in \mathbb{F}_2.
$$
The inverse Walsh transform of
$f(x_1,x_2,\cdots,x_n)$ is
$$
(-1)^{f(x_1,x_2,\cdots,x_n)}
=\frac{1}{2^n}\sum_{(y_1,y_2,\cdots,y_n)
\in \mathbb{F}_2^n}\mathcal{W}
(y_1,y_2,\cdots,y_n)(-1)^{x_1y_1+
x_2y_2+\cdots+x_ny_n}.
$$
Then $(-1)^{f(x_1,x_2,\cdots,x_n)}$
can be expressed by the linear combination of $(-1)^{x_1y_1+
x_2y_2+\cdots+x_ny_n}$. This representation
is unique and is vital in this paper. And we rewrite this in the following lemma.
\begin{lemma}\label{lem2.2}
Let $\tau$ be a positive  integer and
$F(X_1,X_2,\cdots,X_\tau)\in
\mathbb{F}_2[X_1,X_2,\cdots,X_\tau]$.
There exists a  unique set of
$c_{\mathbf{w}}\in \mathbb{C}$ satisfying
that
\begin{equation*}
(-1)^{F(X_1,X_2,\cdots,X_\tau)}
=\sum_{\mathbf{w}
\in \mathbb{F}_2^{\tau}}
{c}_{\mathbf{w}}
(-1)^{\sum_{i=1}^{\tau}w_iX_i},
\end{equation*}
where $\mathbf{w}=(w_1,w_2,\cdots,w_\tau)
\in \ftau$.
\end{lemma}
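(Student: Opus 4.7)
The plan is to view this as a direct application of the inverse Walsh transform stated in the preliminaries, restricted to the $\tau$-variable case. First, I would regard $F(X_1,\ldots,X_\tau)$ as a Boolean function on $\mathbb{F}_2^\tau$, and regard $(-1)^{F(X_1,\ldots,X_\tau)}$ as a complex-valued function on the finite set $\mathbb{F}_2^\tau$.

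For existence, I would simply quote the inverse Walsh transform formula displayed just above the lemma (applied with $n$ replaced by $\tau$): setting
\[
c_{\mathbf{w}} = \frac{1}{2^{\tau}} \mathcal{W}_{F}(\mathbf{w}) = \frac{1}{2^{\tau}} \sum_{\mathbf{X} \in \mathbb{F}_2^{\tau}} (-1)^{F(X_1,\ldots,X_\tau) + \sum_{i=1}^{\tau} w_i X_i}
\]
yields exactly the expansion claimed. This gives one admissible family of coefficients $c_{\mathbf{w}} \in \mathbb{C}$ (in fact real, and in $\frac{1}{2^\tau}\mathbb{Z}$).

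For uniqueness, the key point is that the $2^\tau$ characters $\chi_{\mathbf{w}}(X_1,\ldots,X_\tau) = (-1)^{\sum_{i=1}^{\tau} w_i X_i}$ are linearly independent as elements of the $2^\tau$-dimensional complex vector space of functions $\mathbb{F}_2^\tau \to \mathbb{C}$. I would justify this via the standard orthogonality relation
\[
\sum_{\mathbf{X} \in \mathbb{F}_2^{\tau}} \chi_{\mathbf{w}}(\mathbf{X}) \, \overline{\chi_{\mathbf{w}'}(\mathbf{X})} = 2^{\tau} \, \delta_{\mathbf{w},\mathbf{w}'},
\]
which follows because $\chi_{\mathbf{w}} \overline{\chi_{\mathbf{w}'}} = \chi_{\mathbf{w}+\mathbf{w}'}$ and a nontrivial additive character sums to zero over $\mathbb{F}_2^\tau$. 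Since these $2^\tau$ characters are linearly independent in a space of dimension $2^\tau$, they form a basis, so any complex-valued function on $\mathbb{F}_2^\tau$ (in particular $(-1)^F$) has a unique expansion in this basis.

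There is no real obstacle here: the result is essentially just a restatement of the inverse Walsh transform combined with the basic fact that the additive characters of $\mathbb{F}_2^\tau$ form an orthogonal basis of the function space. The only thing to be careful about is making sure that $F$ is regarded as a function on $\mathbb{F}_2^\tau$ (rather than as a polynomial in an indeterminate ring), so that the Walsh transform framework applies verbatim.
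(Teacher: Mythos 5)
Your proposal is correct and follows essentially the same route as the paper, which introduces this lemma precisely as a restatement of the inverse Walsh transform formula displayed just before it (the paper gives no separate proof, merely asserting existence and uniqueness from that formula). Your explicit justification of uniqueness via the orthogonality of the characters $\chi_{\mathbf{w}}$ fills in the one detail the paper leaves implicit, and it is the standard and correct argument.
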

\begin{definition}
A Boolean function $f$ is called a bent
function if $|\mathcal{W}_f(\beta)|
=2^{n/2}$ for any $\beta \in \mathbb{F}_{2^n}$. If $f$ is an idempotent,
then $f$ is called a bent idempotent or
an idempotent bent function.
\end{definition}
Bent functions exist only for an even number of variables. For a bent function defined over $\mathbb{F}_{2^n}$, its dual
is the Boolean function $\widetilde{f}$
such that $\mathcal{W}_f(\beta)
=2^{n/2}(-1)^{\widetilde{f}(\beta)}$.
The dual of a bent function is also bent.
Thus, Boolean bent functions occur in pairs.
Determining the dual of a bent function
is difficult.
A bent function is called  self-dual
(resp. anti-self-dual) if
$\widetilde{f}=f$ (resp.
$\widetilde{f}=f+1$).

\section{Several new infinite families of bent functions and bent idempotents}
In this section, we generalize the construction of infinite families of
bent functions by Mesnager \cite{M2014} and Xu et al. \cite{XCX2015} and construct  several new infinite families of
bent functions and bent idempotents from some known bent functions.

Let $n=2m$, $g(x)$ be a Boolean bent function over
$\fq$, $\tau$ be a positive  integer
such that $1\leq \tau \leq m$,   $X_1,X_2,\cdots,X_\tau$ be
$\tau$ variables, and
$F(X_1,X_2,\cdots,X_\tau)
\in \mathbb{F}_2[X_1,X_2,\cdots,X_\tau]$.
We will study Boolean functions
over $\fq$ of the form
\begin{equation*}
f(x)=g(x)+F(\Tr_1^n(u_1x),\Tr_1^n(u_2x)
,\cdots, \Tr_1^n(u_\tau x)),
\end{equation*}
where $u_i\in \fqm$.

From Lemma \ref{lem2.2}, as a function from $\mathbb{F}_{2}^{\tau}$
to the complex field $\mathbb{C}$,
$(-1)^{F(X_1,X_2,\cdots,X_\tau)}$ has the unique Fourier expansion, i.e., there exists
a unique set of
$c_{\mathbf{w}}\in \mathbb{C}$  such that
\begin{equation}\label{fourier}
(-1)^{F(X_1,X_2,\cdots,X_\tau)}
=\sum_{\mathbf{w}\in \mathbb{F}_2^{\tau}}
c_{\mathbf{w}}(-1)^{w_1X_1+w_2X_2+
\cdots +w_\tau X_\tau},
\end{equation}
where $\mathbf{w}=(w_1,w_2,\cdots,w_\tau)
\in \ftau$. Equation (\ref{fourier})
holds for any $X_1,X_2,\cdots,X_\tau\in
\mathbb{F}_2$. In particular, take
$X_1=\Tr_1^n(u_1 x)$, $X_2=\Tr_1^n(u_2 x)$,
$\cdots$, $X_\tau=\Tr_1^n(u_\tau x)$. Then for any $x\in \fq$, we have
\begin{equation}\label{fourier1}
(-1)^{F(\Tr_1^n(u_1 x),\Tr_1^n(u_2 x),\cdots,\Tr_1^n(u_\tau x))}
=\sum_{\mathbf{w}\in \mathbb{F}_2^{\tau}}
c_{\mathbf{w}}(-1)^{\Tr_1^n((\sum_{i=1}^{
\tau}w_iu_i)x)}.
\end{equation}
Multiplying both sides of
Equation (\ref{fourier1}) by
$(-1)^{g(x)+\Tr_1^n(\beta x)}$, we have
$$
(-1)^{g(x)+F(\Tr_1^n(u_1 x),\Tr_1^n(u_2 x),\cdots,\Tr_1^n(u_\tau x))+\Tr_1^n(\beta x)}
=\sum_{\mathbf{w}\in \mathbb{F}_2^{\tau}}
c_{\mathbf{w}}(-1)^{g(x)+\Tr_1^n((\beta+
\sum_{i=1}^{
\tau}w_iu_i)x)}.
$$
Further, we have
$$
\sum_{x\in \fq}(-1)^{g(x)+F(\Tr_1^n(u_1 x),\Tr_1^n(u_2 x),\cdots,\Tr_1^n(u_\tau x))+\Tr_1^n(\beta x)}
=\sum_{\mathbf{w}\in \mathbb{F}_2^{\tau}}
c_{\mathbf{w}}\mathcal{W}_g(\beta+\sum_{i=1}^{
\tau}w_iu_i),
$$
i.e.,
$$
\mathcal{W}_f(\beta)
=\sum_{\mathbf{w}\in \mathbb{F}_2^{\tau}}
c_{\mathbf{w}}\mathcal{W}_g(\beta+\sum_{i=1}^{
\tau}w_iu_i).
$$
Let $\widetilde{g}(x)$ be the dual of $g(x)$. Then
\begin{equation}\label{walshf}
\mathcal{W}_f(\beta)
=2^m\sum_{\mathbf{w}\in \mathbb{F}_2^{\tau}}
c_{\mathbf{w}}(-1)^{\widetilde{g}
(\beta+\sum_{i=1}^{
\tau}w_iu_i)}.
\end{equation}
For different bent functions
$g(x)$, we compute $\mathcal{W}_f(\beta)$ and construct new infinite families of bent functions.

\subsection{New bent functions from Kasami functions}\label{subA}
Let $n=2m~(m\geq 2)$. The Kasami function
\cite{M2014} is a bent function of the form $g(x)=\Tr_1^m(\lambda x^{2^m+1})$. And its dual is  $\widetilde{g}(x)
=\Tr_1^m(\lambda^{-1} x^{2^m+1})+1$.
Then we will characterize the bentness of functions of the form
\begin{equation}\label{kasami}
f(x)=\Tr_1^m(\lambda x^{2^m+1})+F(\Tr_1^n(u_1x),\Tr_1^n(u_2x)
,\cdots, \Tr_1^n(u_\tau x)),
\end{equation}
where $u_i\in \fqm$ and
$F(X_1,X_2
,\cdots, X_\tau)$ is a reduced polynomial.
In the following theorem, we present a
construction of a new family of bent functions from the Kasami functions and
compute their duals.

\begin{theorem}\label{at1}
Let $n=2m$, $\tau$ be a positive integer
such that $1\leq \tau \leq m$, $\lambda\in \mathbb{F}_{2^m}^{\times}$,  and $u_1,
u_2,\cdots,u_\tau\in \mathbb{F}_{2^n}$. If
$\Tr_1^n(\lambda^{-1} u_i^{2^m}
u_j)=0$ for any $1\leq i<j\leq \tau$, the function
defined in Equation (\ref{kasami}) is bent and its dual is
$$
\widetilde{f}(x)
=\Tr_1^m(\lambda x^{2^m+1})+F(X_1,X_2,\cdots,X_\tau)+1,
$$
where $X_i=\Tr_1^m(\lambda^{-1}
(x^{2^m}u_i+x u_i^{2^m}+u_i^{2^m+1}))$.
\end{theorem}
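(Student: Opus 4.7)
The plan is to specialize Equation~(\ref{walshf}) to the Kasami bent function $g(x)=\Tr_1^m(\lambda x^{2^m+1})$, whose dual is $\widetilde{g}(x)=\Tr_1^m(\lambda^{-1}x^{2^m+1})+1$, and then to linearize $\widetilde{g}(\beta+\sum_i w_iu_i)$ in the indicator vector $\w$ by killing its quadratic cross terms with the hypothesis $\Tr_1^n(\lambda^{-1}u_i^{2^m}u_j)=0$. Once that exponent is affine in $\w$, the Walsh sum becomes an instance of the Fourier expansion in Equation~(\ref{fourier}) and collapses to a single term of magnitude $2^m$, which simultaneously proves bentness and identifies the dual.

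Concretely, the first step is to expand
$$\Bigl(\beta+\sum_{i=1}^{\tau}w_iu_i\Bigr)^{2^m+1}=\Bigl(\beta^{2^m}+\sum_i w_iu_i^{2^m}\Bigr)\Bigl(\beta+\sum_j w_ju_j\Bigr)$$
and, using $w_i^2=w_i$ in $\mathbb{F}_2$, to regroup the result as
$$\beta^{2^m+1}+\sum_i w_i\bigl(\beta^{2^m}u_i+\beta u_i^{2^m}+u_i^{2^m+1}\bigr)+\sum_{1\le i<j\le\tau}w_iw_j\bigl(u_i^{2^m}u_j+u_j^{2^m}u_i\bigr).$$
Applying $\Tr_1^m(\lambda^{-1}\cdot)$, the first term produces $\widetilde{g}(\beta)+1$, the middle sum matches the linear form $\sum_i w_iX_i(\beta)$ with $X_i(\beta)$ as defined in the statement, and the cross-term contribution equals $\sum_{i<j}w_iw_j\,\Tr_1^m\bigl(\lambda^{-1}(u_i^{2^m}u_j+u_j^{2^m}u_i)\bigr)$. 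Since $\lambda^{-1}\in\mathbb{F}_{2^m}$, trace transitivity identifies this inner trace with $\Tr_1^n(\lambda^{-1}u_i^{2^m}u_j)$, which vanishes by hypothesis. Hence
$$\widetilde{g}\Bigl(\beta+\sum_i w_iu_i\Bigr)\equiv \widetilde{g}(\beta)+\sum_{i=1}^{\tau}w_iX_i(\beta)\pmod{2}.$$

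Plugging this into Equation~(\ref{walshf}) and factoring out $(-1)^{\widetilde{g}(\beta)}$ reduces the Walsh transform to $2^m(-1)^{\widetilde{g}(\beta)}\sum_{\w}c_{\w}(-1)^{\sum_i w_iX_i(\beta)}$. By Equation~(\ref{fourier}), this inner sum is exactly $(-1)^{F(X_1(\beta),\dots,X_\tau(\beta))}$, so $\mathcal{W}_f(\beta)=\pm 2^m$. This both shows that $f$ is bent and identifies the dual as $\widetilde{g}(\beta)+F(X_1(\beta),\dots,X_\tau(\beta))$, matching the claimed form.

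The main obstacle is the cross-term elimination: one must carefully pair the $w_iw_j$ and $w_jw_i$ contributions into a symmetrized product $u_i^{2^m}u_j+u_j^{2^m}u_i$ and invoke the fact that $\lambda^{-1}\in\mathbb{F}_{2^m}$ in order to promote $\Tr_1^m$ into $\Tr_1^n$ and so recognize the hypothesis. Everything else is a direct consequence of Equations~(\ref{walshf}) and~(\ref{fourier}), which have already been set up in the body of the paper.
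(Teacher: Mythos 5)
Your proposal is correct and follows essentially the same route as the paper's proof: expand $(\beta+\sum_i w_iu_i)^{2^m+1}$, kill the $w_iw_j$ cross terms via $\Tr_1^m(\lambda^{-1}(u_i^{2^m}u_j+u_iu_j^{2^m}))=\Tr_1^n(\lambda^{-1}u_i^{2^m}u_j)=0$, and collapse the Walsh sum through Equations~(\ref{walshf}) and~(\ref{fourier}). Your form of the dual, with $\lambda^{-1}$ in the quadratic term, is in fact the consistent one (cf.\ Corollary~\ref{ac2}); the $\lambda$ appearing in the theorem statement is a typo.
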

\begin{proof}
From Equation (\ref{walshf}),
\begin{align*}
\mathcal{W}_f(\beta)
=&2^m\sum_{\mathbf{w}\in \mathbb{F}_2^{\tau}}
c_{\mathbf{w}}(-1)^{\Tr_1^m(\lambda^{-1}
(\beta+\sum_{i=1}^{
\tau}w_iu_i)^{2^m+1})+1}\nonumber\\
=&-2^m\sum_{\mathbf{w}\in \mathbb{F}_2^{\tau}}
c_{\mathbf{w}}(-1)^{\Tr_1^m(\lambda^{-1}
(\beta+\sum_{i=1}^{
\tau}w_iu_i)^{2^m+1})}.
\end{align*}
From properties of the trace function,
\begin{align*}
\Tr_1^m(\lambda^{-1}
(\beta+\sum_{i=1}^{
\tau}w_iu_i)^{2^m+1})=&
\Tr_1^m(\lambda^{-1}\beta^{2^m+1})
+\sum_{i=1}^{\tau}w_i\Tr_1^m(\lambda^{-1}
(\beta^{2^m}u_i+\beta u_i^{2^m}+u_i^{2^m+1}))\\
&+\sum_{1\leq i<j\leq \tau}w_iw_j
\Tr_1^m(\lambda^{-1}(u_i^{2^m}u_j+
u_iu_j^{2^m})).
\end{align*}
From $\Tr_1^n(\lambda^{-1}u_i^{2^m}u_j)
=\Tr_1^m(\lambda (u_i^{2^m}u_j+
u_iu_j^{2^m}))=0$,
\begin{equation*}
\Tr_1^m(\lambda^{-1}
(\beta+\sum_{i=1}^{
\tau}w_iu_i)^{2^m+1})=
\Tr_1^m(\lambda^{-1}\beta^{2^m+1})
+\sum_{i=1}^{\tau}w_iX_i,
\end{equation*}
where $X_i=\Tr_1^m(\lambda^{-1}
(\beta^{2^m}u_i+\beta u_i^{2^m}+u_i^{2^m+1}))$.
From Equation (\ref{walshf}), we have
$$
\mathcal{W}_f(\beta)
=-2^m
(-1)^{\Tr_1^m(\lambda \beta^{2^m+1})}\sum_{\mathbf{w}\in \mathbb{F}_2^{\tau}}
c_{\mathbf{w}}(-1)^{\sum_{i=1}^{\tau}w_iX_i}.
$$
From Equation (\ref{fourier}),
$$
\mathcal{W}_f(\beta)
=-2^m
(-1)^{\Tr_1^m(\lambda \beta^{2^m+1})}
(-1)^{F(X_1,X_2,\cdots,X_\tau)}=
2^m
(-1)^{\Tr_1^m(\lambda \beta^{2^m+1})+F(X_1,X_2,\cdots,X_\tau)+1}
.
$$
Hence, $f(x)$ is bent and its dual is
$$
\widetilde{f}(x)
=\Tr_1^m(\lambda x^{2^m+1})+F(X_1,X_2,\cdots,X_\tau)+1,
$$
where $X_i=\Tr_1^m(\lambda^{-1}
(x^{2^m}u_i+x u_i^{2^m}+u_i^{2^m+1}))$.
\end{proof}
\begin{corollary}\label{ac2}
Let $n=2m$, $\tau$ be a positive integer such that  $1\leq \tau\leq m$, $\lambda, u_1,
u_2,\cdots,u_\tau\in \mathbb{F}_{2^m}^{\times}$, and $F(X_1,X_2,\cdots,X_\tau)$ be a reduced
polynomial in $\mathbb{F}_2[X_1,X_2,\cdots,X_\tau]$ of  algebraic degree $d$,
where $u_1,u_2,\cdots,u_\tau$ are linearly independent over $\mathbb{F}_2$. Then the function $f(x)$ defined in  Equation  (\ref{kasami}) is bent and  its dual is
$$
\widetilde{f}(x)=\Tr_1^m(\lambda^{-1} x^{2^m+1})+F(\Tr_1^n(\lambda^{-1}u_1x)
+\Tr_1^m(\lambda^{-1}u_1^2),\Tr_1^n(\lambda^{-1}u_2x)
+\Tr_1^m(\lambda^{-1}u_2^2)
,\cdots, \Tr_1^n(\lambda^{-1}u_\tau x)
+\Tr_1^m(\lambda^{-1}u_\tau^2))+1.
$$
If $d\geq 2$,  then $f(x)$ is a bent function of algebraic degree $d$.
\end{corollary}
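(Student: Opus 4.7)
The plan is to obtain Corollary \ref{ac2} as a direct specialization of Theorem \ref{at1} to the case when the coefficients $\lambda$ and $u_i$ all lie in the subfield $\mathbb{F}_{2^m}$. Three things need to be checked: (i) the trace hypothesis of Theorem \ref{at1} is automatic here, (ii) the generic expression for the dual collapses to the one stated in the corollary, and (iii) the algebraic degree equals $d$ whenever $d\geq 2$.

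First I would verify the hypothesis $\Tr_1^n(\lambda^{-1} u_i^{2^m} u_j)=0$. Since $u_i\in \mathbb{F}_{2^m}$, we have $u_i^{2^m}=u_i$, so $\lambda^{-1} u_i^{2^m} u_j=\lambda^{-1} u_i u_j\in \mathbb{F}_{2^m}$. For any $a\in \mathbb{F}_{2^m}$, the trace transitivity gives $\Tr_1^n(a)=\Tr_1^m(\Tr_m^n(a))=\Tr_1^m(a+a^{2^m})=\Tr_1^m(2a)=0$ in characteristic $2$. Hence the hypothesis is vacuously satisfied and Theorem \ref{at1} applies, yielding bentness and the dual $\widetilde{f}(x)=\Tr_1^m(\lambda^{-1} x^{2^m+1})+F(X_1,\ldots,X_\tau)+1$ with $X_i=\Tr_1^m(\lambda^{-1}(x^{2^m}u_i+x u_i^{2^m}+u_i^{2^m+1}))$.

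Next I would rewrite each $X_i$ in terms of the absolute trace $\Tr_1^n$. Again using $u_i^{2^m}=u_i$ and $u_i^{2^m+1}=u_i^2$, one has
$$X_i=\Tr_1^m\bigl(\lambda^{-1} u_i (x+x^{2^m})\bigr)+\Tr_1^m(\lambda^{-1} u_i^2).$$
Since $\lambda^{-1} u_i\in \mathbb{F}_{2^m}$, the identity $\Tr_1^n(ax)=\Tr_1^m(\Tr_m^n(ax))=\Tr_1^m\bigl(a(x+x^{2^m})\bigr)$ valid for $a\in \mathbb{F}_{2^m}$ shows that $\Tr_1^m(\lambda^{-1} u_i(x+x^{2^m}))=\Tr_1^n(\lambda^{-1} u_i x)$. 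Substituting produces $X_i=\Tr_1^n(\lambda^{-1} u_i x)+\Tr_1^m(\lambda^{-1} u_i^2)$, which is exactly the form of $\widetilde{f}$ stated in the corollary.

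Finally, the algebraic degree claim follows from Lemma \ref{degree}: since $u_1,\ldots,u_\tau$ are linearly independent over $\mathbb{F}_2$, the summand $F(\Tr_1^n(u_1x),\ldots,\Tr_1^n(u_\tau x))$ has algebraic degree exactly $d$, while the Kasami part $\Tr_1^m(\lambda x^{2^m+1})$ has algebraic degree $2$. For $d\geq 3$, top-degree monomials of $F$ cannot be cancelled by the quadratic term, so $f$ has degree $d$. For $d=2$, $f$ has degree at most $2$, but every bent function has algebraic degree at least $2$, forcing equality. I do not anticipate any serious obstacle: the trace hypothesis trivialises on $\mathbb{F}_{2^m}$, the rewriting of the $X_i$ is a direct trace-transitivity computation, and the only mildly subtle point is the $d=2$ case of the degree argument, which is dispatched by the standard lower bound on the algebraic degree of a bent function.
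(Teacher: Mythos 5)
Your proposal is correct and follows essentially the same route as the paper: specialize Theorem \ref{at1}, note that the hypothesis $\Tr_1^n(\lambda^{-1}u_i^{2^m}u_j)=0$ holds automatically because the product lies in $\mathbb{F}_{2^m}$, rewrite the $X_i$ via trace transitivity, and invoke Lemma \ref{degree} for the degree. Your treatment of the $d=2$ case (using that a bent function cannot be affine) is slightly more careful than the paper's, which simply cites Lemma \ref{degree}, but the argument is the same in substance.
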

\begin{proof}
Note that $u_i\in \mathbb{F}_{2^m}$ and
$\Tr_1^n(\lambda u_i^{2^m}u_j)
=\Tr_1^n(\lambda u_iu_j)=0$.
From Theorem \ref{at1}, $f(x)$ is a bent function.
Note that $\Tr_1^n(\lambda^{-1} u_i^{2^m}x)
=\Tr_1^n(\lambda^{-1} u_i x)$ and
$\Tr_1^n(\lambda^{-1} u_i^{2^m+1})
=\Tr_1^n(\lambda^{-1} u_i^2)$.
From Theorem \ref{at1}, we have
$$
\widetilde{f}(x)=\Tr_1^m(\lambda^{-1} x^{2^m+1})+F(\Tr_1^n(\lambda^{-1}u_1x)
+\Tr_1^m(\lambda^{-1}u_1^2),\Tr_1^n(\lambda^{-1}u_2x)
+\Tr_1^m(\lambda^{-1}u_2^2)
,\cdots, \Tr_1^n(\lambda^{-1}u_\tau x)
+\Tr_1^m(\lambda^{-1}u_\tau^2))+1.
$$
The algebraic degree of  $F(X_1,X_2,\cdots,X_\tau)$
is $d$ and the algebraic degree of
$\Tr_1^m(\lambda^{-1} x^{2^m+1})$ is $2$. If $d\geq 2$, from Lemma \ref{degree}, the algebraic degree of $f(x)$ is $d$.

Hence, this corollary follows.
\end{proof}

\begin{theorem}\label{at3}
Let $n=2m$, $u$ be a normal element of $\mathbb{F}_{2^m}$,  and $F(X_1,X_2,\cdots,X_m)$ be a reduced rotation  symmetric
polynomial in $\mathbb{F}_2[X_1,X_2,\cdots,X_m]$ of algebraic degree $d$. Then
the function
$$
f(x)=\Tr_1^m( x^{2^m+1})+F(\Tr_1^n(ux),\Tr_1^n(u^2x)
,\cdots, \Tr_1^n(u^{2^{m-1}} x))
$$
is a bent idempotent. And its dual
$$
\widetilde{f}(x)=\Tr_1^m( x^{2^m+1})+F(\Tr_1^n(ux)+1,\Tr_1^n(u^2x)+1
,\cdots, \Tr_1^n(u^{2^{m-1}} x)+1)+1
$$
is also a bent idempotent.
If $d\geq 2$, $f(x)$ is a bent idempotent of  algebraic degree $d$.
\end{theorem}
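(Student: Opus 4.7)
My plan is to derive this theorem as a direct corollary of Corollary \ref{ac2}, adding only the verification that $f$ is an idempotent (which is the new ingredient here and exploits rotation symmetry) and that the constants appearing in the dual simplify as claimed.

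First, I would apply Corollary \ref{ac2} with $\lambda=1$, $\tau=m$, and $u_i=u^{2^{i-1}}$ for $i=1,\ldots,m$. Since $u$ is a normal element of $\mathbb{F}_{2^m}$, the conjugates $u,u^2,\ldots,u^{2^{m-1}}$ form a basis of $\mathbb{F}_{2^m}$ over $\mathbb{F}_2$, hence lie in $\mathbb{F}_{2^m}^{\times}$ and are linearly independent; all hypotheses of Corollary \ref{ac2} are met. It follows at once that $f$ is bent with dual
$$
\widetilde f(x)=\Tr_1^m(x^{2^m+1})+F\bigl(\Tr_1^n(ux)+\Tr_1^m(u^2),\ldots,\Tr_1^n(u^{2^{m-1}}x)+\Tr_1^m(u^{2^m})\bigr)+1.
$$
Because $\Tr_1^m(u^{2^i})=\Tr_1^m(u)$ for every $i$, all the additive constants are equal to $\Tr_1^m(u)$. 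Moreover this common value must be $1$: if $\Tr_1^m(u)=u+u^2+\cdots+u^{2^{m-1}}=0$, the conjugates of $u$ would be linearly dependent, contradicting normality. This matches the dual formula in the statement.

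Next I would verify idempotence, $f(x^2)=f(x)$. The Kasami term is trivially idempotent since $x^{2^m+1}\in\mathbb{F}_{2^m}$, so $\Tr_1^m\bigl((x^2)^{2^m+1}\bigr)=\Tr_1^m\bigl((x^{2^m+1})^2\bigr)=\Tr_1^m(x^{2^m+1})$. For the $F$-term I would use
$$
\Tr_1^n(u^{2^{i-1}}x^2)=\Tr_1^n\bigl((u^{2^{i-2}}x)^2\bigr)=\Tr_1^n(u^{2^{i-2}}x),
$$
where the exponent on $u$ is read modulo $m$ (using $u^{2^m}=u$). Thus substituting $x^2$ cyclically shifts the input tuple to $F$ by one position, and the rotation symmetry of $F$ guarantees that $f(x^2)=f(x)$. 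The very same argument shows the dual is also an idempotent: the polynomial $\widetilde F(X_1,\ldots,X_m):=F(X_1+1,\ldots,X_m+1)$ inherits rotation symmetry from $F$ (since all variables are shifted by the same constant), so the dual is a rotation symmetric bent idempotent. The algebraic-degree claim follows from Lemma \ref{degree}: the Kasami summand has degree $2$ while the $F$-summand has degree $d\geq 2$, so the sum has degree $d$.

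The only real subtlety is the index bookkeeping in the idempotence step: one must carefully verify that squaring the argument indeed produces a cyclic shift of the tuple $(\Tr_1^n(u^{2^{i-1}}x))_{i=1}^{m}$, relying on $u^{2^m}=u$ to wrap around at the edge. Once this is checked, rotation symmetry of $F$ closes the argument, and everything else reduces to quoting Corollary \ref{ac2} and Lemma \ref{degree}.
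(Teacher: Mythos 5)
Your proof is correct and follows essentially the same route as the paper's: invoke Corollary \ref{ac2} (with $\lambda=1$, $\tau=m$, $u_i=u^{2^{i-1}}$) for bentness and the dual, use normality of $u$ to get $\Tr_1^m(u)=1$ and the cyclic-shift/rotation-symmetry argument for idempotence of $f$ and $\widetilde f$, and quote Lemma \ref{degree} for the degree. Your explicit justification that $\Tr_1^m(u)=1$ (via linear independence of the conjugates) and the careful index bookkeeping are slightly more detailed than the paper's, which is a mild improvement rather than a difference in approach.
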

\begin{proof}
Since $u$ is a normal element of
$\mathbb{F}_{2^m}$, then $u,u^2,\cdots,u^{2^{m-1}}$ is a basis of
$\mathbb{F}_{2^m}$ over $\mathbb{F}_2$ and
$\Tr_1^m(u)=1$.
From Corollary \ref{ac2},  $f(x)$ is a bent function  and
$\widetilde{f}(x)$ is also bent.
From the rotation symmetric property of $F(X_1,X_2,\cdots,X_\tau)$,
\begin{align*}
f(x^2)=& \Tr_1^m( x^{2(2^m+1)})+F(\Tr_1^n(ux^2)+1,\Tr_1^n(u^2x^2)
+1,\cdots, \Tr_1^n(u^{2^{m-1}} x^2)+1)\\
=& \Tr_1^m( x^{2^m+1})+F(\Tr_1^n(u^{2^{n-1}}x)+1,
\Tr_1^n(u^{2n}x^2)+1
,\cdots, \Tr_1^n(u^{2^{m+n-2}} x)+1)\\
=& \Tr_1^m( x^{2^m+1})+F(\Tr_1^n(u^{2^{m-1}}x)+1,
\Tr_1^n(ux^2)+1
,\cdots, \Tr_1^n(u^{2^{m-2}} x)+1)\\
=& \Tr_1^m( x^{2^m+1})+F(\Tr_1^n(ux)+1,
\Tr_1^n(u^2x)+1
,\cdots, \Tr_1^n(u^{2^{m-1}} x)+1)\\
=& f(x).
\end{align*}
Similarly, $\widetilde{f}(x^2)=
\widetilde{f}(x)$.
$f(x)$ and $\widetilde{f}(x)$ are both idempotents. If $d\geq 2$, from Lemma \ref{degree}, then
$f(x)$ is a bent idempotent of algebraic degree $d$.

Hence, this theorem follows.
\end{proof}

\begin{corollary}\label{ac4}
Let $n=2m$,   $u$ be a normal element of $\mathbb{F}_{2^m}$, and $d$ be a positive integer such that $2\leq d\leq m$. Then
the function
$$
f(x)=\Tr_1^m( x^{2^m+1})+
\sum_{0\leq i_1<i_2<\cdots<i_d\leq m-1}
\prod_{j=1}^{d}(\Tr_1^n(u^{2^{i_j}}x))
$$
is a bent idempotent of algebraic degree $d$.
\end{corollary}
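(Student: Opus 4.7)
The plan is to recognize the given $f$ as a specialization of the construction in Theorem \ref{at3} with $F$ taken to be the elementary symmetric polynomial of degree $d$ in $m$ variables. Concretely, I would define
$$F(X_1,X_2,\ldots,X_m)=\sum_{1\leq i_1<i_2<\cdots<i_d\leq m} X_{i_1}X_{i_2}\cdots X_{i_d},$$
and observe that after the index shift $i_j\mapsto i_j-1$, the sum appearing in the statement equals $F\bigl(\Tr_1^n(ux),\Tr_1^n(u^2x),\ldots,\Tr_1^n(u^{2^{m-1}}x)\bigr)$. Thus $f$ has exactly the shape required by Theorem \ref{at3}.

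Next, I would check the three hypotheses on $F$. Each monomial of $F$ is a squarefree product of distinct variables, so $F$ lies in the reduced form described by \eqref{anf}. Because $F$ is invariant under the full symmetric group acting on $X_1,\ldots,X_m$, it is in particular invariant under the cyclic shift $(X_1,\ldots,X_m)\mapsto(X_m,X_1,\ldots,X_{m-1})$, hence rotation symmetric. Finally, each monomial of $F$ has degree exactly $d$, so the algebraic degree of $F$ equals $d$.

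With these verifications in place, I would simply invoke Theorem \ref{at3}: the hypothesis $d\ge 2$ is part of the corollary's assumption, so the theorem yields that $f$ is a bent idempotent of algebraic degree $d$.

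There is essentially no obstacle here; the only thing that could go wrong is a misreading of which polynomial to plug into Theorem \ref{at3}, but the elementary symmetric polynomial $e_d$ in $m$ variables is the obvious and only natural choice. The proof is therefore a one-line application of Theorem \ref{at3} once the identification $F=e_d$ is made explicit.
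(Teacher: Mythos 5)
Your proof is correct and matches the paper's own argument exactly: the paper also obtains this corollary by taking $F$ in Theorem \ref{at3} to be the $d$-th elementary symmetric polynomial in $m$ variables. Your additional verification that $e_d$ is reduced, rotation symmetric, and of degree $d$ is a welcome (if routine) expansion of what the paper leaves implicit.
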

\begin{proof}
From Theorem \ref{at3}, we just take
$F(X_1,X_2,\cdots,X_m)$ as the $d$-th
elementary
symmetric polynomial  and obtain this corollary.
\end{proof}
\begin{corollary}\label{ac5}
Let $n=2m$, $\lambda\in \mathbb{F}_{2^m}^{\times}$,
$u_1,
u_2,\cdots,u_m$ be a basis of
$\mathbb{F}_{2^m}$ over $\mathbb{F}_2$, $F(X_1,X_2,\cdots,X_m)$ be a reduced
polynomial in  $\mathbb{F}_2[X_1,X_2,\cdots,X_m]$ of algebaric degree $d$. Then the function
 $$
f(x)=\Tr_1^m(\lambda x^{2^m+1})+F(\Tr_1^n(u_1x),\Tr_1^n(u_2x)
,\cdots, \Tr_1^n(u_m x))
$$
is bent and  its dual is
$$
\widetilde{f}(x)=\Tr_1^m(\lambda^{-1} x^{2^m+1})+F(\Tr_1^n(\lambda^{-1}u_1x)
+\Tr_1^m(\lambda^{-1}u_1^2),\Tr_1^n(\lambda^{-1}u_2x)
+\Tr_1^m(\lambda^{-1}u_2^2)
,\cdots, \Tr_1^n(\lambda^{-1}u_m x)
+\Tr_1^m(\lambda^{-1}u_m^2))+1.
$$
If $d\geq 2$, then $f(x)$ is a bent function of algebraic degree $d$.
\end{corollary}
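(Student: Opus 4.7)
The plan is straightforward: Corollary \ref{ac5} is the specialization of Corollary \ref{ac2} to the case $\tau = m$, so the proof consists of little more than matching hypotheses. First I would verify that any basis $u_1,\ldots,u_m$ of $\mathbb{F}_{2^m}$ over $\mathbb{F}_2$ satisfies the assumptions of Corollary \ref{ac2}, namely (i) each $u_i$ lies in $\mathbb{F}_{2^m}^{\times}$, since no basis vector can be zero, and (ii) the $u_i$ are linearly independent over $\mathbb{F}_2$ by the very definition of a basis. The scalar $\lambda \in \mathbb{F}_{2^m}^{\times}$ also coincides with the scalar required by Corollary \ref{ac2}, so every premise of that corollary is in force.

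With these hypotheses in place, the rest is immediate. The bentness of $f(x)$ and the explicit formula for $\widetilde{f}(x)$ are exactly what Corollary \ref{ac2} delivers when $\tau = m$, so no fresh Walsh-transform computation is required. For the algebraic-degree claim I would likewise appeal to Corollary \ref{ac2}: when $d \geq 2$, Lemma \ref{degree} gives algebraic degree $d$ to the polynomial part $F(\Tr_1^n(u_1x),\ldots,\Tr_1^n(u_m x))$, and since the Kasami summand $\Tr_1^m(\lambda x^{2^m+1})$ has algebraic degree only $2$, the top-degree contributions sit entirely in the $F$-part, yielding algebraic degree exactly $d$ for $f(x)$.

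There is essentially no main obstacle here, as the result is a restatement of the preceding corollary under the stronger hypothesis that the $u_i$ form a full basis rather than merely a linearly independent system of arbitrary size $\tau \le m$. The only sanity check worth flagging is that the two dual formulas agree term by term: both feed the quantities $\Tr_1^n(\lambda^{-1} u_i x) + \Tr_1^m(\lambda^{-1} u_i^2)$ into $F$ in the same order, so no transcription gap arises between the two statements.
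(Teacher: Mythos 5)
Your proposal is correct and matches the paper's own proof, which simply invokes Corollary \ref{ac2} with $\tau=m$; your additional checks that a basis consists of nonzero, linearly independent elements are exactly the (implicit) hypothesis verification needed. Nothing further is required.
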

\begin{proof}
From Corollary \ref{ac2} and
$\tau=m$, this corollary
follows.
\end{proof}
\begin{corollary}\label{ac6}
Let $n=2m$, $\tau$ be a positive integer such that $2\leq \tau\leq m$, and $\lambda, u_1,
u_2,\cdots,u_\tau\in \mathbb{F}_{2^m}^{\times}$,
where $u_i$ are linearly independent over $\mathbb{F}_2$.  Then the function
$$
{f}(x)=\Tr_1^m(\lambda  x^{2^m+1})+\prod_{i=1}^{\tau}
\Tr_1^n(u_ix)
$$
is bent of
algebraic degree $\tau$ and its dual is
$$
\widetilde{f}(x)=\Tr_1^m(\lambda^{-1} x^{2^m+1})+\prod_{i=1}^{\tau}
(\Tr_1^n(\lambda^{-1}u_ix)+
\Tr_1^m(\lambda^{-1}u_i^2))+1.
$$
\end{corollary}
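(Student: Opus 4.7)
The plan is to recognize this statement as a direct specialization of Corollary \ref{ac2} with the particular reduced polynomial $F(X_1,\ldots,X_\tau)=\prod_{i=1}^{\tau}X_i$. No fresh Walsh-transform computation is required; all the machinery (the Fourier expansion of $(-1)^F$ from Lemma \ref{lem2.2}, the trace identity $\Tr_1^n(\lambda^{-1}u_i^{2^m}u_j)=0$ that holds automatically when $u_i,u_j\in\mathbb{F}_{2^m}$, and the algebraic-degree bookkeeping via Lemma \ref{degree}) has already been executed in Theorem \ref{at1} and Corollary \ref{ac2}.

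First, I would check that the chosen $F$ meets the hypotheses of Corollary \ref{ac2}. The polynomial $X_1X_2\cdots X_\tau$ is already reduced (each variable appears to the first power), lies in $\mathbb{F}_2[X_1,\ldots,X_\tau]$, and has algebraic degree exactly $d=\tau$. By assumption $\tau\ge 2$, so the condition $d\ge 2$ needed for the algebraic-degree conclusion of Corollary \ref{ac2} is satisfied. The remaining hypotheses, namely $\lambda,u_1,\ldots,u_\tau\in\mathbb{F}_{2^m}^{\times}$, the linear independence of the $u_i$ over $\mathbb{F}_2$, and $1\le\tau\le m$, are precisely the standing assumptions of the corollary being proved.

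Second, I would substitute $F(X_1,\ldots,X_\tau)=\prod_{i=1}^{\tau}X_i$ into the conclusion of Corollary \ref{ac2}. On the primal side this immediately gives the bentness of
$$
f(x)=\Tr_1^m(\lambda x^{2^m+1})+\prod_{i=1}^{\tau}\Tr_1^n(u_ix)
$$
together with the algebraic degree $\tau$. On the dual side, substituting the same $F$ into the formula for $\widetilde{f}$ yields
$$
\widetilde{f}(x)=\Tr_1^m(\lambda^{-1}x^{2^m+1})+\prod_{i=1}^{\tau}\bigl(\Tr_1^n(\lambda^{-1}u_ix)+\Tr_1^m(\lambda^{-1}u_i^2)\bigr)+1,
$$
which is exactly the claimed expression.

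There is essentially no obstacle in this argument; the entire proof collapses to a one-line citation of Corollary \ref{ac2}. The only point worth pausing over is verifying that the arguments of $F$ in the dual formula of Corollary \ref{ac2}, namely $\Tr_1^n(\lambda^{-1}u_ix)+\Tr_1^m(\lambda^{-1}u_i^2)$, plug in unchanged into the $\tau$-fold product; once that is observed, the corollary follows.
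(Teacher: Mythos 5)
Your proposal is correct and matches the paper's own proof, which is simply a one-line citation of Corollary \ref{ac2} applied with $F(X_1,\ldots,X_\tau)=\prod_{i=1}^{\tau}X_i$; you merely spell out the routine verifications (that this $F$ is reduced of degree $\tau\ge 2$, and that the dual formula specializes as claimed) that the paper leaves implicit.
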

\begin{proof}
From Corollary \ref{ac2}, this corollary follows.
\end{proof}

Let $T_0=\{x\in \mathbb{F}_{2^m}:
\Tr_1^m(x)=0\}$. Then the dimension of $T_0$ over $\mathbb{F}_2$ is $m-1$.
The following corollary gives a class of
self-dual bent functions.
\begin{corollary}\label{ac7}
Let $n=2m$,
$u_1,
u_2,\cdots,u_{m-1}$ be  a basis of
$T_0$ over $\mathbb{F}_2$, and
$F(X_1,X_2,\cdots,X_{m-1})$ be a reduced
polynomial in  $\mathbb{F}_2[X_1,X_2,\cdots,X_{m-1}]$ of algebraic degree $d$. Then the function
$$
f(x)=\Tr_1^m( x^{2^m+1})+F(\Tr_1^n(u_1x),\Tr_1^n(u_2x)
,\cdots, \Tr_1^n(u_{m-1} x))
$$
is anti-self-dual bent. And its dual is
$$
\widetilde{f}(x)=\Tr_1^m( x^{2^m+1})+F(\Tr_1^n(u_1x)
,\Tr_1^n(u_2x)
,\cdots, \Tr_1^n(u_{m-1} x)
)+1.
$$
If $d\geq 2$, then $f(x)$ is an  anti-self-dual bent function of algebraic degree $d$.
\end{corollary}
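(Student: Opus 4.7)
The plan is to derive this result as a direct specialization of Corollary \ref{ac2}, taking $\lambda=1$ and $\tau=m-1$. Since $u_1,\ldots,u_{m-1}$ form a basis of the $(m-1)$-dimensional subspace $T_0\subset\mathbb{F}_{2^m}$, they lie in $\mathbb{F}_{2^m}^{\times}$ and are $\mathbb{F}_2$-linearly independent, so both hypotheses of Corollary \ref{ac2} are satisfied. Plugging in $\lambda=1$, I would read off immediately that $f$ is bent with dual
$$\widetilde{f}(x)=\Tr_1^m(x^{2^m+1})+F(\Tr_1^n(u_1x)+\Tr_1^m(u_1^2),\ldots,\Tr_1^n(u_{m-1}x)+\Tr_1^m(u_{m-1}^2))+1.$$

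The crux is then showing that each correction term $\Tr_1^m(u_i^2)$ vanishes, so that the argument list of $F$ in $\widetilde{f}$ collapses onto the argument list of $F$ in $f$. This follows from the Frobenius-invariance of the absolute trace over $\mathbb{F}_{2^m}$: namely $\Tr_1^m(u_i^2)=\Tr_1^m(u_i)$, and the hypothesis $u_i\in T_0$ forces $\Tr_1^m(u_i)=0$. Hence $\widetilde{f}(x)=f(x)+1$, which by definition is precisely the anti-self-dual bent property. This is the whole point of restricting the defining linear forms to $T_0$ rather than all of $\mathbb{F}_{2^m}^{\times}$.

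For the algebraic degree claim, I would invoke Lemma \ref{degree} with $\tau=m-1$: linear independence of $u_1,\ldots,u_{m-1}$ ensures $F(\Tr_1^n(u_1x),\ldots,\Tr_1^n(u_{m-1}x))$ has algebraic degree $d$, while $\Tr_1^m(x^{2^m+1})$ has algebraic degree $2$ because $2^m+1$ has binary Hamming weight $2$. When $d\geq 2$, the sum has algebraic degree $\max(2,d)=d$. There is no genuine obstacle here; the substantive content has already been packaged into Corollary \ref{ac2}, and the novelty of the statement is the observation that the choice of $T_0$ as the ambient space for $u_1,\ldots,u_{m-1}$ is exactly what kills the additive constants in the dual and converts bentness into anti-self-dual bentness.
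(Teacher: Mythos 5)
Your proof is correct and follows exactly the paper's route: the paper's own (very terse) proof likewise observes that $\Tr_1^m(u_i^2)=\Tr_1^m(u_i)=0$ for $u_i\in T_0$ and then invokes Corollary \ref{ac2} with $\lambda=1$, so that the additive constants in the dual vanish and $\widetilde{f}=f+1$. Your write-up simply fills in the same steps in more detail, including the degree argument via Lemma \ref{degree}.
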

\begin{proof}
Since $u_i\in T_0$, then $\Tr_1^m(u_i^2)
=\Tr_1^m(u_i)=0$.
From Corollary \ref{ac2}, this corollary follows.
\end{proof}
\begin{remark}
Let $u_1,
u_2,\cdots,u_{m-1}$ be  a basis of
$T_0$ over $\mathbb{F}_2$. Take three
reduced polynomials $F_1(X_1,\cdots,
X_{m-1})$, $F_2(X_1,\cdots,
X_{m-1})$, and $F_3(X_1,\cdots,
X_{m-1})$. From Corollary \ref{ac7},
the three functions
\begin{align*}
& f_1(x)=\Tr_1^m( x^{2^m+1})+F_1(\Tr_1^n(u_1x),\Tr_1^n(u_2x)
,\cdots, \Tr_1^n(u_{m-1} x)),\\
& f_2(x)=\Tr_1^m( x^{2^m+1})+F_2(\Tr_1^n(u_1x),\Tr_1^n(u_2x)
,\cdots, \Tr_1^n(u_{m-1} x)),
\\
& f_3(x)=\Tr_1^m( x^{2^m+1})+F_3(\Tr_1^n(u_1x),\Tr_1^n(u_2x)
,\cdots, \Tr_1^n(u_{m-1} x))
\end{align*}
are anti-self-dual  bent functions.
From Corollary \ref{ac7}, $f_1
+f_2+f_3$ is also anti-self-dual bent. This solves the open problem proposed by
Mesnager  \cite{M2014}.
\end{remark}

\subsection{New bent functions and bent idempotents from  quadratic bent idempotents}\label{quadratic}

Let $g(x)$ be a quadratic idempotent bent function of the form
\begin{equation}\label{qibf}
g(x)=\sum_{i=0}^{m-1}c_i\Tr_1^n(x^{2^i+1})
+c_m\Tr_1^m(x^{2^m+1})+\varepsilon,
\end{equation}
where $\varepsilon\in \mathbb{F}_2$ and
$c_i\in \mathbb{F}_2~(i=0,1,\cdots,m)$.
The dual of a quadratic bent function is also a quadratic bent function. Let
$\widetilde{g}(x)$ be the dual of $g(x)$.
The following lemma shows that if
$g(x)$ is idempotent, then
$\widetilde{g}(x)$ is also idempotent.
\begin{lemma}\label{idempotent}
Let $g(x)$ be a bent idempotent. Then
$\widetilde{g}(x)$ is also a bent idempotent.
\end{lemma}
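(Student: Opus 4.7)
The plan is to verify the idempotent condition $\widetilde{g}(\beta^2) = \widetilde{g}(\beta)$ directly by comparing the Walsh transform values at $\beta$ and at $\beta^2$. Since $g$ is bent, we already know $\widetilde{g}$ is bent (duals of bent functions are bent), so the only new content is the idempotency of $\widetilde{g}$.

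First I would write out $\mathcal{W}_g(\beta^2) = \sum_{x \in \fq}(-1)^{g(x)+\Tr_1^n(\beta^2 x)}$. The key manipulation is the substitution $x = y^2$, which is a bijection on $\fq$ because squaring is the Frobenius automorphism. This turns the sum into
\begin{equation*}
\mathcal{W}_g(\beta^2) = \sum_{y \in \fq}(-1)^{g(y^2)+\Tr_1^n(\beta^2 y^2)}.
\end{equation*}
Now I apply two standard identities: the idempotent hypothesis $g(y^2) = g(y)$, and the trace identity $\Tr_1^n(z^2) = \Tr_1^n(z)$ applied to $z = \beta y$. Together these reduce the exponent to $g(y) + \Tr_1^n(\beta y)$, so $\mathcal{W}_g(\beta^2) = \mathcal{W}_g(\beta)$.

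Finally, using the definition $\mathcal{W}_g(\beta) = 2^m (-1)^{\widetilde{g}(\beta)}$ of the dual, the equality $\mathcal{W}_g(\beta^2) = \mathcal{W}_g(\beta)$ forces $(-1)^{\widetilde{g}(\beta^2)} = (-1)^{\widetilde{g}(\beta)}$, i.e., $\widetilde{g}(\beta^2) = \widetilde{g}(\beta)$ for every $\beta \in \fq$. Combined with the bentness of $\widetilde{g}$, this concludes that $\widetilde{g}$ is a bent idempotent.

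I do not expect any real obstacle here: the proof is a one-line change of variables in the Walsh sum, and the only things used are that squaring is a permutation of $\fq$, that the trace is Frobenius-invariant, and the hypothesis $g(x) = g(x^2)$. No special structure of the quadratic form \eqref{qibf} is needed, which is why the lemma is stated for general bent idempotents rather than only for the quadratic ones appearing in the surrounding subsection.
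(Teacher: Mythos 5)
Your proof is correct and follows essentially the same route as the paper: a change of variables $x\mapsto x^2$ in the Walsh sum, combined with $g(x)=g(x^2)$ and the Frobenius-invariance of the trace, to get $\mathcal{W}_g(\beta^2)=\mathcal{W}_g(\beta)$ and hence $\widetilde{g}(\beta^2)=\widetilde{g}(\beta)$. The only cosmetic difference is that the paper first derives $\mathcal{W}_g(\beta)=\mathcal{W}_g(\beta^{2^{n-1}})$ and then evaluates at $\beta^2$, whereas you work directly at $\beta^2$.
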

\begin{proof}
The Walsh transform of $g$ is
\begin{align*}
\mathcal{W}_g(\beta)=& \sum_{x\in \mathbb{F}_q}(-1)^{g(x)+\Tr_1^n(\beta x)}\\
=& \sum_{x\in \mathbb{F}_q}(-1)^{g(x^2)+\Tr_1^n(\beta x^2)}\\
=&\sum_{x\in \mathbb{F}_q}(-1)^{g(x)+\Tr_1^n(
\beta^{2^{n-1}} x)}.
\end{align*}
Then $\mathcal{W}_g(\beta)
=\mathcal{W}_g(\beta^{2^{n-1}})$ for any
$\beta$. Hence,
$\mathcal{W}_g(\beta^2)
=\mathcal{W}_g((\beta^2)^{2^{n-1}})
=\mathcal{W}_g(\beta)$. Since
$g(x)$ is bent, then
$$
2^m(-1)^{\widetilde{g}(\beta^2)}
=2^m(-1)^{\widetilde{g}(\beta)},
$$
i.e., $\widetilde{g}(\beta^2)
=\widetilde{g}(\beta)$.
Hence, $\widetilde{g}(x)$ is also
idempotent bent.
\end{proof}

From Lemma \ref{idempotent}, we have
\begin{equation*}
\widetilde{g}(x)=\sum_{i=0}^{m-1}
\widetilde{c}_i\Tr_1^n(x^{2^i+1})
+\widetilde{c}_m\Tr_1^m(x^{2^m+1})
+\widetilde{\varepsilon}.
\end{equation*}

In this subsection, we consider  functions of the form
\begin{equation}\label{q-bent}
f(x)=g(x)+F(\Tr_1^n(u_1x),\Tr_1^n(u_2x)
,\cdots, \Tr_1^n(u_\tau x)),
\end{equation}
where $g(x)$ is defined in
Equation (\ref{qibf}),
$\tau$ is a positive integer
such that $1\leq \tau \leq m$,
$u_1,u_2,\cdots,u_\tau\in \mathbb{F}_{2^m}^{\times}$, and
$F(X_1,X_2,\cdots,X_\tau)$ is a reduced
polynomial in $\mathbb{F}_2[X_1,X_2,\cdots,X_\tau]$.
\begin{theorem}\label{thm-i}
Let $n=2m$,  $\tau$ be a positive integer
such that $1\leq \tau \leq m$, $g(x)$ be quadratic idempotent bent
of the form (\ref{qibf}),
$u_1,u_2,\cdots,u_\tau\in \mathbb{F}_{2^m}^{\times}$,
and $F(X_1,X_2,\cdots,X_\tau)$ be a reduced
polynomial in  $\mathbb{F}_2[X_1,X_2,\cdots,X_\tau]$.
Then the function $f(x)$ defined in Equation (\ref{q-bent}) is bent.
\end{theorem}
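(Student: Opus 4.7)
The plan is to start from the general Walsh-transform formula already derived in the paper, namely Equation~(\ref{walshf}),
\[
\mathcal{W}_f(\beta) = 2^m \sum_{\mathbf{w}\in\mathbb{F}_2^\tau} c_{\mathbf{w}}(-1)^{\widetilde{g}(\beta+\sum_{i=1}^{\tau}w_iu_i)},
\]
and to show that, thanks to the hypothesis $u_i\in\mathbb{F}_{2^m}^{\times}$, the shift $U=\sum_i w_iu_i$ enters $\widetilde{g}(\beta+U)$ only linearly in the coordinates $w_i$. Once this is established, the sum over $\mathbf{w}$ collapses to $(-1)^{F(\cdots)}$ by Lemma~\ref{lem2.2}, which proves bentness and at the same time gives a formula for $\widetilde{f}$.

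First I would invoke Lemma~\ref{idempotent} to write the dual in the same idempotent quadratic shape,
\[
\widetilde{g}(x)=\sum_{i=0}^{m-1}\widetilde{c}_i\Tr_1^n(x^{2^i+1})+\widetilde{c}_m\Tr_1^m(x^{2^m+1})+\widetilde{\varepsilon}.
\]
Then I would expand $(\beta+U)^{2^i+1}$ and $(\beta+U)^{2^m+1}$ and isolate the purely-$U$ (constant-in-$\beta$) contributions. Using $U\in\mathbb{F}_{2^m}$ these simplify sharply: $U^{2^i+1}\in\mathbb{F}_{2^m}$ forces $\Tr_1^n(U^{2^i+1})=0$ for $0\le i\le m-1$, and $U^{2^m+1}=U^2$ with $\Tr_1^m(U^2)=\Tr_1^m(U)$, which is already linear in $U$. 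Consequently all genuinely quadratic-in-$U$ pieces disappear.

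Next I would collect the remaining cross terms. For each $i$, $\Tr_1^n(\beta^{2^i}U+\beta U^{2^i})$ is linear in $U$; similarly $\Tr_1^m(U(\beta^{2^m}+\beta))$ is linear in $U$ (and in fact equals $\Tr_1^n(U\beta)$ because $U\in\mathbb{F}_{2^m}$). Grouping everything, I obtain
\[
\widetilde{g}(\beta+U)=\widetilde{g}(\beta)+L_\beta(U),
\]
with $L_\beta$ an $\mathbb{F}_2$-linear form in $U$. Writing $U=\sum_k w_ku_k$ gives $L_\beta(U)=\sum_{k=1}^{\tau}w_k A_k(\beta)$, where
\[
A_k(\beta)=\widetilde{c}_m\Tr_1^m(u_k)+\sum_{i=0}^{m-1}\widetilde{c}_i\bigl(\Tr_1^n(\beta^{2^i}u_k)+\Tr_1^n(\beta u_k^{2^i})\bigr)+\widetilde{c}_m\Tr_1^n(u_k\beta).
\]

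Plugging back, the Walsh sum factors as
\[
\mathcal{W}_f(\beta)=2^m(-1)^{\widetilde{g}(\beta)}\sum_{\mathbf{w}\in\mathbb{F}_2^\tau}c_{\mathbf{w}}(-1)^{\sum_k w_k A_k(\beta)},
\]
and the sum over $\mathbf{w}$ is precisely the Fourier-type identity of Lemma~\ref{lem2.2} / Equation~(\ref{fourier}) evaluated at $X_k=A_k(\beta)$, yielding $(-1)^{F(A_1(\beta),\ldots,A_\tau(\beta))}$. Hence $|\mathcal{W}_f(\beta)|=2^m$ for every $\beta$, so $f$ is bent (and as a bonus the argument identifies $\widetilde{f}(\beta)=\widetilde{g}(\beta)+F(A_1(\beta),\ldots,A_\tau(\beta))$). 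The only technical obstacle is the careful expansion of $(\beta+U)^{2^i+1}$ and $(\beta+U)^{2^m+1}$ and the verification that, under $u_i\in\mathbb{F}_{2^m}$, every quadratic-in-$U$ contribution either is killed by $\Tr_1^n$ on $\mathbb{F}_{2^m}$ or collapses to a linear term via $\Tr_1^m(U^2)=\Tr_1^m(U)$.
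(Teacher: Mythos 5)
Your proposal is correct and follows essentially the same route as the paper: starting from Equation~(\ref{walshf}), using Lemma~\ref{idempotent} to put $\widetilde{g}$ in the quadratic idempotent form, showing that for $U\in\mathbb{F}_{2^m}$ the shift $\widetilde{g}(\beta+U)-\widetilde{g}(\beta)$ is $\mathbb{F}_2$-linear in $U$ (because $\Tr_1^n$ kills $U^{2^i+1}\in\mathbb{F}_{2^m}$ and $\Tr_1^m(U^2)=\Tr_1^m(U)$), and then collapsing the sum over $\mathbf{w}$ via Equation~(\ref{fourier}). Your explicit linear forms $A_k(\beta)$ agree with the paper's $X_i$, so no further comment is needed.
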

\begin{proof}
From Equation (\ref{walshf}),
\begin{equation*}
\mathcal{W}_f(\beta)
=2^m\sum_{\mathbf{w}\in \mathbb{F}_2^{\tau}}
c_{\mathbf{w}}(-1)^{\widetilde{g}
(\beta+\sum_{i=1}^{
\tau}w_iu_i)}.
\end{equation*}
When $u\in \mathbb{F}_{2^m}$,
we have $\Tr_1^n((\beta
+u)^{2^i+1})=
\Tr_1^n(\beta^{2^i+1})
+\Tr_1^n((\beta^{2^i}
+\beta^{2^{n-i}})u)$ and
$\Tr_1^m((\beta
+u)^{2^m+1})=
\Tr_1^m(\beta^{2^m+1})
+\Tr_1^m((\beta^{2^m}
+\beta+1)u)$.
Let $u=\sum_{i=1}^{\tau}w_iu_i$. Then
$$
\widetilde{g}(\beta+u)=
\widetilde{g}(\beta)+
\sum_{i=1}^{\tau}w_iX_i.
$$
where $X_i=\Tr_1^n(u_i\sum_{j=0}^{m-1}
\widetilde{c}_j(\beta^{2^j}+
\beta^{2^{n-j}}))+\Tr_1^m(u_m(\beta^{2^m}
+\beta+1))$. Then
$$
\mathcal{W}_f(\beta)
=2^m(-1)^{\widetilde{g}(\beta)}
\sum_{\mathbf{w}\in \mathbb{F}_2^{\tau}}
c_{\mathbf{w}}(-1)^{\sum_{i=1}^{
\tau}w_iX_i}.
$$
From Equation (\ref{fourier}),
$$
\mathcal{W}_f(\beta)
=2^m(-1)^{\widetilde{g}(\beta)
+F(X_1,X_2,\cdots,X_\tau)}.
$$
Hence, $f(x)$ is bent.
\end{proof}

\begin{theorem}\label{rs-bent}
Let $n=2m$, $g(x)$ be quadratic idempotent bent of the form (\ref{qibf}), $u$ be a normal element of
$\mathbb{F}_{2^m}$, and
$F(X_1,X_2,\cdots,X_m)$ be a reduced
polynomial of algebraic degree $d$ in  $\mathbb{F}_2[X_1,X_2,\cdots,X_m]$.
Then the function
$$
f(x)=g(x)+F(\Tr_1^n(ux),\Tr_1^n(u^2x)
,\cdots, \Tr_1^n(u^{2^{m-1}} x)),
$$ is always bent. Further,
if $F(X_1,X_2,\cdots,X_m)$ is
rotation symmetric, then
$f(x)$ is a bent idempotent. And when  $d\geq 2$, $f(x)$ is a bent idempotent of   algebraic degree $d$.
\end{theorem}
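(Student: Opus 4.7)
The plan is to verify the three claims (bentness, idempotence, algebraic degree) in order, each one reducing essentially to an earlier lemma or theorem in the paper once the trace identities are sorted out.

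First, for bentness, I would apply Theorem \ref{thm-i} directly with $\tau=m$ and $u_i=u^{2^{i-1}}$. Since $u$ is a normal element of $\mathbb{F}_{2^m}$, the elements $u,u^2,\ldots,u^{2^{m-1}}$ form a basis of $\mathbb{F}_{2^m}$ over $\mathbb{F}_2$ and in particular all lie in $\mathbb{F}_{2^m}^{\times}$. The hypotheses of Theorem \ref{thm-i} are therefore satisfied, so $f(x)$ is bent.

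Next, for idempotence under the assumption that $F$ is rotation symmetric, I would compute $f(x^2)$ and show it equals $f(x)$. Since $g$ is itself an idempotent, $g(x^2)=g(x)$. For the arguments of $F$, I would use the trace identity $\Tr_1^n(\alpha x^2)=\Tr_1^n((\alpha^{2^{n-1}}x)^2)=\Tr_1^n(\alpha^{2^{n-1}}x)$ together with $u\in\mathbb{F}_{2^m}$, which gives $u^{2^{n-1}}=u^{2^{m-1}}$ and more generally $(u^{2^{i}})^{2^{n-1}}=u^{2^{i-1}}$ for $i\geq 1$. Consequently
\begin{equation*}
\bigl(\Tr_1^n(ux^2),\Tr_1^n(u^2x^2),\ldots,\Tr_1^n(u^{2^{m-1}}x^2)\bigr)
=\bigl(\Tr_1^n(u^{2^{m-1}}x),\Tr_1^n(ux),\ldots,\Tr_1^n(u^{2^{m-2}}x)\bigr),
\end{equation*}
i.e.\ the argument tuple is cyclically shifted. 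Because $F$ is rotation symmetric, its value is unchanged, and we conclude $f(x^2)=f(x)$.

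Finally, for the algebraic degree when $d\geq 2$: the quadratic idempotent bent function $g(x)$ has degree $2$, and the linear independence of $u,u^2,\ldots,u^{2^{m-1}}$ over $\mathbb{F}_2$ (from normality) allows me to apply Lemma \ref{degree} to conclude that the $F$-summand has algebraic degree exactly $d$. Since $d\geq 2$, the degrees combine to give $\max(2,d)=d$. None of the three steps is really a serious obstacle; the only small subtlety is keeping the exponent arithmetic $2^{n-1}\equiv 2^{m-1}$ modulo $2^m-1$ straight when moving from $x^2$ back to $x$, and checking that the resulting permutation of arguments is a genuine cyclic shift so that rotation symmetry actually applies.
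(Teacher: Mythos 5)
Your proposal is correct and follows essentially the same route as the paper: bentness via Theorem \ref{thm-i} with $\tau=m$ and $u_i=u^{2^{i-1}}$, idempotence by checking that squaring $x$ cyclically shifts the argument tuple of the rotation symmetric $F$ (your exponent computation $\Tr_1^n(\alpha x^2)=\Tr_1^n(\alpha^{2^{n-1}}x)$ with $u^{2^{n-1}}=u^{2^{m-1}}$ is exactly the verification the paper leaves implicit), and the degree claim via Lemma \ref{degree}. The only cosmetic caveat is that for $d=2$ the bound $\deg f\le\max(2,d)$ does not by itself rule out cancellation of the quadratic parts; one can close this by noting a bent function cannot be affine, a point the paper's own proof also glosses over.
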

\begin{proof}
From Theorem \ref{thm-i},
$f(x)$ is bent.
 Since
$u$ is a normal element of
$\mathbb{F}_{2^m}$ and  $F(X_1,X_2,\cdots,X_m)$ is
rotation symmetric, we can verify that
$$
f(x^2)=f(x).
$$
Hence, $f$ is idempotent.
When $d\geq 2$, from Lemma \ref{degree},
the algebraic degree of $f(x)$ is $d$.
\end{proof}
\begin{remark}
The quadratic idempotent bent function $g(x)$ is
required in the construction.
Ma et al. \cite{MLZ2005} proved that
$g(x)$ is bent if and only if
$gcd(\sum_{i=1}^{m-1}c_i(X^i+X^{n-i})
+c_mX^m,X^n+1)=1$. This condition
is equivalent to that
the linearized polynomial $L(x)
=\sum_{i=1}^{m-1}c_i
(x^{2^i}+x^{2^{n-i}})+c_mx^{2^m}$ is a permutation polynomial (a necessary condition is that $L(1)\neq 0$, i.e.,
$c_m=1$).

If $n$ is a power of 2, then
according to \cite{SSS2009}[Proposition 3.1],
$g(x)$ is bent if and only if $c_m=1$.
For any $c_i\in \mathbb{F}_2$ and
rotation symmetric polynomial
$F(X_1,X_2,\cdots,X_n)$, the function in
Theorem \ref{rs-bent} of the form
$$
f(x)=\sum_{i=1}^{m-1}c_i
\Tr_1^n(x^{2^i+1})+\Tr_1^m(x^{2^m+1})+
F(\Tr_1^n(u x),\Tr_1^n(u^2x)
,\cdots, \Tr_1^n(u^{2^{m-1}} x))
$$
is always idempotent bent.

Note that more results, valid
for more general values of n, can be found in \cite{YG2006}.

\end{remark}
\subsection{New bent functions from Gold-like monomial functions}\label{subB}

Mesnager \cite{M2014} showed that
the monomial function
$g(x)=\Tr_1^{4k}(\lambda x^{2^k+1})$ over
$\mathbb{F}_{2^{4k}}$ is self-dual bent, i.e., for any $\beta\in \mathbb{F}_{2^{4k}}$,
the dual $\widetilde{g}(\beta)=
\Tr_1^{4k}(\lambda x^{2^k+1})$, where
$\lambda+\lambda^{2^{3k}}=1$.

In the following theorem, we present a
construction of a new family of bent functions from the Gold-like monomial  function and
compute their duals.
\begin{theorem}\label{bt1}
Let $n=4k$, $k$   be positive integers, $\tau$ be a positive integer
such that $1\leq \tau \leq 2k$,  $\lambda\in \mathbb{F}_{2^{4k}}$
such that $\lambda+\lambda^{2^{3k}}=1$,
$u_1,
u_2,\cdots,u_\tau\in \mathbb{F}_{2^{4k}}^{
\times}$,  and
$F(X_1,X_2,\cdots,X_\tau)$ be a reduced
polynomial in   $\mathbb{F}_2[X_1,X_2,\cdots,X_\tau]$. If
$\Tr_1^{4k}(\lambda(u_i^{2^k}u_j+
u_iu_j^{2^k}))=0$ for any
$1\leq i<  j\leq \tau$,  then the function $f(x)$
$$
f(x)=\Tr_1^{4k}(\lambda x^{2^k+1})+F(\Tr_1^{4k}(u_1x),\Tr_1^{4k}(u_2x)
,\cdots, \Tr_1^{4k}(u_\tau x))
$$
is bent.
\end{theorem}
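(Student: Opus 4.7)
The plan is to mirror the proof of Theorem \ref{at1}, exploiting the self-duality of the Gold-like monomial function $g(x) = \Tr_1^{4k}(\lambda x^{2^k+1})$ established by Mesnager. Since $n=4k$, we have $m = 2k$, so Equation (\ref{walshf}) yields
$$
\mathcal{W}_f(\beta) = 2^{2k}\sum_{\mathbf{w}\in \mathbb{F}_2^{\tau}} c_{\mathbf{w}}(-1)^{\widetilde{g}(\beta + \sum_{i=1}^{\tau} w_i u_i)},
$$
where $\widetilde{g}(y) = \Tr_1^{4k}(\lambda y^{2^k+1})$ by the self-duality assumption $\lambda + \lambda^{2^{3k}} = 1$. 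The goal is to show that the inner sum collapses to a single sign, namely $(-1)^{\widetilde{g}(\beta)+F(X_1,\dots,X_\tau)}$, for suitable $X_i$'s.

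First I would expand, for $u = \sum_{i=1}^{\tau} w_i u_i$, the expression
$$
\widetilde{g}(\beta + u) = \Tr_1^{4k}(\lambda \beta^{2^k+1}) + \Tr_1^{4k}\!\bigl(\lambda(\beta^{2^k} u + \beta u^{2^k})\bigr) + \Tr_1^{4k}(\lambda u^{2^k+1}),
$$
using the standard identity for Gold-type quadratic forms. Next, since $w_i \in \mathbb{F}_2$ (hence $w_i^2 = w_i$), expanding the last summand gives
$$
u^{2^k+1} = \sum_{i=1}^{\tau} w_i u_i^{2^k+1} + \sum_{1 \le i < j \le \tau} w_i w_j (u_i^{2^k} u_j + u_i u_j^{2^k}).
$$
The hypothesis $\Tr_1^{4k}(\lambda(u_i^{2^k}u_j + u_i u_j^{2^k})) = 0$ is precisely what is needed to kill all the quadratic cross terms, reducing $\Tr_1^{4k}(\lambda u^{2^k+1})$ to a linear combination $\sum_i w_i \Tr_1^{4k}(\lambda u_i^{2^k+1})$ in the $w_i$'s.

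Combining these, $\widetilde{g}(\beta + u) = \widetilde{g}(\beta) + \sum_{i=1}^{\tau} w_i X_i$ where
$$
X_i = \Tr_1^{4k}\!\bigl(\lambda(\beta^{2^k} u_i + \beta u_i^{2^k})\bigr) + \Tr_1^{4k}(\lambda u_i^{2^k+1}).
$$
Substituting back,
$$
\mathcal{W}_f(\beta) = 2^{2k}(-1)^{\widetilde{g}(\beta)}\sum_{\mathbf{w}\in \mathbb{F}_2^{\tau}} c_{\mathbf{w}}(-1)^{\sum_{i=1}^{\tau} w_i X_i},
$$
and invoking Equation (\ref{fourier}) with $(X_1,\dots,X_\tau) \in \mathbb{F}_2^\tau$ collapses the Fourier sum to $(-1)^{F(X_1,\dots,X_\tau)}$. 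Hence $|\mathcal{W}_f(\beta)| = 2^{2k} = 2^{n/2}$, so $f$ is bent.

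The main step that requires care is the elimination of the mixed $w_i w_j$ contributions: this is exactly where the hypothesis is used, and it must be verified that the quadratic terms $w_i^2$ cause no trouble (they reduce to linear terms thanks to $w_i \in \mathbb{F}_2$). No other ingredients are needed beyond Mesnager's self-duality and the generic Fourier identity from Lemma \ref{lem2.2}; the computation of $\widetilde{f}$ could be read off from the above display if desired, but the theorem only asserts bentness.
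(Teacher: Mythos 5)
Your proposal is correct and follows essentially the same route as the paper: both start from Equation (\ref{walshf}) with Mesnager's self-duality $\widetilde{g}=g$, expand $\Tr_1^{4k}(\lambda(\beta+\sum_i w_iu_i)^{2^k+1})$ into linear terms $\sum_i w_iX_i$ plus cross terms $w_iw_j\Tr_1^{4k}(\lambda(u_i^{2^k}u_j+u_iu_j^{2^k}))$ that the hypothesis annihilates, and then collapse the Fourier sum via Equation (\ref{fourier}). The only cosmetic difference is that you expand $u^{2^k+1}$ separately before combining, while the paper expands the full expression in one step.
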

\begin{proof}
From Equation (\ref{walshf}),
$$
\mathcal{W}_{f}(\beta)
=2^{2k}\sum_{\mathbf{w}\in \mathbb{F}_2^{\tau}}
c_{\mathbf{w}}(-1)^{\Tr_1^{4k}(\lambda
(\beta+\sum_{i=1}^{
\tau}w_iu_i)^{2^k+1})}.
$$
Note that
$$
\Tr_1^{4k}(\lambda
(\beta+\sum_{i=1}^{
\tau}w_iu_i)^{2^k+1})
=\Tr_1^{4k}(\lambda\beta^{2^k+1})
+\sum_{i=1}^{\tau}w_i\Tr_1^{4k}(\lambda
(\beta^{2^k}u_i+\beta u_i^{2^k}+u_i^{
2^k+1}))+
\sum_{1\leq i< j\leq \tau}
w_iw_j\Tr_1^{4k}(\lambda(u_i^{2k}u_j
+u_iu_j^{2^k})).
$$
From $\Tr_1^{4k}(\lambda(u_i^{2k}u_j
+u_iu_j^{2^k}))=0$,
$$
\mathcal{W}_f(\beta)
=2^{2k}(-1)^{\Tr_1^{4k}(\lambda
\beta^{2^k+1})}\sum_{\mathbf{w}
\in \mathbb{F}_2^{\tau}}c_{\mathbf{w}}
(-1)^{\sum_{i=1}^{\tau}w_iX_i},
$$
where $X_i=\Tr_1^{4k}(\lambda
(\beta^{2^k}u_i+\beta u_i^{2^k}+u_i^{
2^k+1}))$.
From Equation (\ref{fourier}),
$$
\mathcal{W}_f(\beta)
=2^{2k}(-1)^{\Tr_1^{4k}(\lambda
\beta^{2^k+1})+F(X_1,X_2,\cdots, X_\tau)}.
$$
Hence, $f(x)$ is bent and its dual is
$$
\widetilde{f}(x)
=\Tr_1^{4k}(\lambda
x^{2^k+1})+F(X_1,X_2,\cdots, X_\tau),
$$
where $X_i=\Tr_1^{4k}(\lambda
(x^{2^k}u_i+x u_i^{2^k}+u_i^{
2^k+1}))$.
\end{proof}

\subsection{New bent functions from Niho exponents}\label{subC}

Leander and Kholosha \cite{LK2006} introduced
a class of bent functions of the form
$$
g(x)=\Tr_1^m(x^{2^m+1})+
\Tr_1^n(\sum_{i=1}^{2^{k-1}-1}x^{(2^m-1)
i/2^k+1}),
$$
where $(k,m)=1$.
Its dual \cite{BCHKM2012} is
\begin{equation*}
\widetilde{g}(x)=
\Tr_1^m((\alpha A+x^{2^m}
+\alpha^{2^{n-k}})A^{1/2^{k-1}}),
\end{equation*}
where $\alpha+\alpha^{2^m}=1$ and
$A=1+x+x^{2^m}$.

In the following theorem, we present a
construction of a new family of bent functions from the Niho exponents and compute their duals.
\begin{theorem}\label{ct1}
Let $n=2m$, $k$ be a positive integer
satisfying $(k,m)=1$,
$\tau$ be a positive integer
such that $1\leq \tau \leq m$,
$u_1,
u_2,\cdots,u_\tau\in \mathbb{F}_{2^{m}}^{
\times}$, and
$F(X_1,X_2,\cdots,X_\tau)$ be a reduced
polynomial in   $\mathbb{F}_2[X_1,X_2,\cdots,X_\tau]$.  Then the function
$$
f(x)=\Tr_1^m(x^{2^m+1})+
\Tr_1^n(\sum_{i=1}^{2^{k-1}-1}x^{(2^m-1)
i/2^k+1})+F(\Tr_1^{n}(u_1x),\Tr_1^{n}(u_2x)
,\cdots, \Tr_1^{n}(u_\tau x))
$$
is bent.
\end{theorem}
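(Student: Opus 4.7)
The plan is to follow the same template as in Theorems \ref{at1} and \ref{bt1}: start from the general identity (\ref{walshf}) for the Walsh transform of $f$ in terms of the dual $\widetilde{g}$, show that the dual splits linearly in the perturbation $u = \sum_{i=1}^{\tau} w_i u_i$ because $u \in \mathbb{F}_{2^m}$, and then close with the Fourier expansion (\ref{fourier}).

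First I would write
$$
\mathcal{W}_f(\beta) = 2^m \sum_{\mathbf{w} \in \mathbb{F}_2^{\tau}} c_{\mathbf{w}}\,(-1)^{\widetilde{g}(\beta + \sum_{i=1}^{\tau} w_i u_i)},
$$
where $\widetilde{g}(x) = \Tr_1^m\bigl((\alpha A + x^{2^m} + \alpha^{2^{n-k}}) A^{1/2^{k-1}}\bigr)$ with $A = A(x) = 1 + x + x^{2^m}$ and $\alpha + \alpha^{2^m} = 1$. The crucial observation, and the actual content of the argument, is that when $u \in \mathbb{F}_{2^m}$ one has $u^{2^m} = u$, so
$$
A(\beta + u) = 1 + (\beta+u) + (\beta+u)^{2^m} = 1 + \beta + \beta^{2^m} = A(\beta).
$$
Thus the "nonlinear" part $A$ of $\widetilde{g}$ is invariant under translation by $\mathbb{F}_{2^m}$, and only the term $x^{2^m}$ inside the outer bracket changes, picking up an additive $u$.

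This gives the linearization
$$
\widetilde{g}(\beta + u) = \widetilde{g}(\beta) + \Tr_1^m\bigl(u\,A(\beta)^{1/2^{k-1}}\bigr).
$$
Setting $u = \sum_{i=1}^{\tau} w_i u_i$ and $X_i = \Tr_1^m\bigl(u_i\,A(\beta)^{1/2^{k-1}}\bigr)$, the dual evaluates to
$$
\widetilde{g}\Bigl(\beta + \sum_{i=1}^{\tau} w_i u_i\Bigr) = \widetilde{g}(\beta) + \sum_{i=1}^{\tau} w_i X_i,
$$
and the Walsh sum factors as
$$
\mathcal{W}_f(\beta) = 2^m (-1)^{\widetilde{g}(\beta)} \sum_{\mathbf{w} \in \mathbb{F}_2^{\tau}} c_{\mathbf{w}}\,(-1)^{\sum_{i=1}^{\tau} w_i X_i} = 2^m (-1)^{\widetilde{g}(\beta) + F(X_1, \ldots, X_\tau)},
$$
the last equality by (\ref{fourier}). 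Hence $|\mathcal{W}_f(\beta)| = 2^m$ for every $\beta$, i.e., $f$ is bent (with dual $\widetilde{f}(x) = \widetilde{g}(x) + F(X_1, \ldots, X_\tau)$ where $X_i = \Tr_1^m(u_i A(x)^{1/2^{k-1}})$).

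The main obstacle, and the only non-mechanical step, is verifying the linearization of $\widetilde{g}$ under $\mathbb{F}_{2^m}$-translates. Once one notices the $\mathbb{F}_{2^m}$-invariance $A(\beta+u) = A(\beta)$, everything else is a direct application of the same Fourier/Walsh identity used in the preceding theorems, and no pairwise condition of the form $\Tr_1^n(\lambda u_i^{2^m} u_j) = 0$ is needed because the linearization contains no quadratic cross terms in the $w_i$'s—this is why the hypothesis here is simply $u_i \in \mathbb{F}_{2^m}^\times$ with no orthogonality condition.
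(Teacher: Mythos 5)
Your proposal is correct and follows essentially the same route as the paper: both arguments hinge on the invariance $A(\beta+u)=A(\beta)$ for $u\in\mathbb{F}_{2^m}$, deduce the linearization $\widetilde{g}(\beta+u)=\widetilde{g}(\beta)+\Tr_1^m(uA^{1/2^{k-1}})$, and then close via Equations (\ref{walshf}) and (\ref{fourier}). Your write-up actually makes the key translation-invariance step more explicit than the paper does, and your closing remark about why no pairwise trace condition is needed is an accurate observation.
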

\begin{proof}
Note that
$$
1+(\beta+\sum_{i=1}^{\tau}w_iu_i)
+(\beta+\sum_{i=1}^{\tau}w_iu_i)^{2^m}
=1+\beta+\beta^{2^m}.
$$
From the dual $\widetilde{g}(x)$,
$$
\widetilde{g}(
\beta+\sum_{i=1}^{\tau}w_iu_i)
=\widetilde{g}(\beta)
+\sum_{i=1}^{\tau}w_i\Tr_1^m(u_i
A^{1/2^{k-1}}),
$$
where $A=1+\beta+\beta^{2^m}$.
From Equation (\ref{walshf}), we have
\begin{align*}
\mathcal{W}_f(\beta)
=&2^m(-1)^{\widetilde{g}
(\beta)}\sum_{\mathbf{w}\in \mathbb{F}_2^{\tau}}
c_{\mathbf{w}}(-1)^{\sum_{i=1}^{
\tau}w_iX_i},
\end{align*}
where $X_i=
\Tr_1^m(u_iA^{1/2^{k-1}})$.
From Equation (\ref{fourier}),
$$
\mathcal{W}_f(\beta)
=2^{m}(-1)^{\widetilde{g}(\beta)
+F(X_1,X_2,\cdots, X_\tau)}.
$$
Hence, $f(x)$ is bent and its dual is
$$
\widetilde{f}(x)
=\widetilde{g}(x)
+F(\Tr_1^m(u_1A^{1/2^{k-1}}),
\Tr_1^m(u_2A^{1/2^{k-1}}),\cdots, \Tr_1^m(u_\tau A^{1/2^{k-1}})),
$$
where $A=1+x+x^{2^m}$.
\end{proof}
\begin{corollary}\label{cc1}
Let $n=2m$, $k$ be a positive integer
satisfying $(k,m)=1$, $u$ be a normal  element of $\mathbb{F}_{2^m}$, and
$F(X_1,X_2,\cdots,X_m)$ be a reduced
polynomial in   $\mathbb{F}_2[X_1,X_2,\cdots,X_m]$. Then the function
$$
f(x)=\Tr_1^m(x^{2^m+1})+
\Tr_1^n(\sum_{i=1}^{2^{k-1}-1}x^{(2^m-1)
i/2^k+1})+F(\Tr_1^{n}(ux),\Tr_1^{n}(u^2x)
,\cdots, \Tr_1^{n}(u^{2^{m-1}} x))
$$
is bent. Further, if $F(X_1,X_2,\cdots,X_m)$ is rotation symmetric, $f(x)$ is a bent idempotent.
\end{corollary}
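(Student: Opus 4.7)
The plan is to derive this corollary as a direct specialization of Theorem \ref{ct1} followed by a short idempotence check, exactly paralleling the approach used to prove Theorem \ref{at3} from Corollary \ref{ac2}.

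First I would apply Theorem \ref{ct1} with $\tau=m$ and the choice $u_i = u^{2^{i-1}}$ for $i=1,\dots,m$. Because $u$ is a normal element of $\mathbb{F}_{2^m}$, the elements $u,u^2,\dots,u^{2^{m-1}}$ all lie in $\mathbb{F}_{2^m}^{\times}$ (and are in particular linearly independent over $\mathbb{F}_2$), so the hypotheses of Theorem \ref{ct1} are satisfied and the bentness of $f(x)$ follows immediately, together with an explicit formula for its dual in terms of $\Tr_1^m(u^{2^{i-1}}A^{1/2^{k-1}})$ where $A=1+x+x^{2^m}$.

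Next I would verify that $f(x^2)=f(x)$ under the additional hypothesis that $F$ is rotation symmetric. The two Niho-type summands are Galois-invariant: $\Tr_1^m(x^{2(2^m+1)})=\Tr_1^m(x^{2^m+1})$ and $\Tr_1^n(x^{2((2^m-1)i/2^k+1)})=\Tr_1^n(x^{(2^m-1)i/2^k+1})$, since $\Tr_1^n(y^2)=\Tr_1^n(y)$ for every $y\in\mathbb{F}_{2^n}$. For the $F$-part, I would use $u^{2^m}=u$ (hence $u^{2^{n-1}}=u^{2^{m-1}}$) and the identity $\Tr_1^n(u^{2^j}x^2)=\Tr_1^n((u^{2^{j-1}}x)^2)=\Tr_1^n(u^{2^{j-1}}x)$ with indices read cyclically modulo $m$. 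This shows that the substitution $x\mapsto x^2$ cyclically shifts the tuple $(\Tr_1^n(ux),\dots,\Tr_1^n(u^{2^{m-1}}x))$ by one position, so rotation symmetry of $F$ forces the $F$-part to be invariant as well, yielding $f(x^2)=f(x)$.

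The main obstacle, and it is minor, is keeping the cyclic indexing of the trace arguments straight when checking the $F$-part: one must be careful that the exponents of $u$ are reduced modulo $2^m-1$ rather than $2^n-1$, which is where the assumption $u\in\mathbb{F}_{2^m}$ is essential. Once that bookkeeping is in place, the verification is a routine computation modeled on the corresponding step in the proof of Theorem \ref{at3}.
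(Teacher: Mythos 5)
Your proposal is correct and follows the paper's own route: bentness is obtained by specializing Theorem \ref{ct1} to $u_i=u^{2^{i-1}}$, and idempotence is checked exactly as in the proof of Theorem \ref{at3}, using $\Tr_1^n(u^{2^j}x^2)=\Tr_1^n(u^{2^{j-1}}x)$ together with $u^{2^m}=u$ to see that $x\mapsto x^2$ cyclically permutes the arguments of $F$. The paper's proof is just a two-line citation of these same two ingredients, so no further comparison is needed.
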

\begin{proof}
From Theorem \ref{ct1}, $f(x)$ is bent.
From a similar proof of
Theorem \ref{at3}, we have
$f(x^2)=f(x)$. Hence,
$f(x)$ is a bent idempotent.
\end{proof}

\subsection{New bent functions from the
Maiorana-McFarland class of bent functions }\label{subD}

The Maiorana-McFarland class of bent functions are defined over $\mathbb{F}_{2^m}\times \mathbb{F}_{2^m}$ of the form
\begin{equation*}
g(x,y)=\Tr_1^m(x\pi(y))+h(y),
\end{equation*}
where $\pi: \mathbb{F}_{2^m}\longrightarrow \mathbb{F}_{2^m}$ is a permutation and
$h$ is a Boolean function over $\mathbb{F}_{2^m}$. Its Walsh transform is
$$
\mathcal{W}_{g}(\beta_1,\beta_2)
=2^m(-1)^{\Tr_1^m(\beta_2\pi^{-1}(\beta_1))
+h(\pi^{-1}(\beta_1))}.
$$
And its dual \cite{C2010} is
$$
\widetilde{g}(x,y)=
\Tr_1^m(y\pi^{-1}(x))+h(\pi^{-1}(x)).
$$

In the following theorem, we present a
construction of a new family of bent functions from the Maiorana-McFarland
class of bent functions and
compute their duals.
\begin{theorem}
Let $n=2m$, $\tau$ be a positive integer
such that $1\leq \tau \leq m$,  $u_i=(u_i^{(1)},u_i^{(2)})
\in \mathbb{F}_{2^m}\times
\mathbb{F}_{2^m}~~(1\leq i\leq \tau)$ be linearly independent over $\mathbb{F}_2$, $\pi$ be a linear permutation polynomial
over $\mathbb{F}_{2^m}$, and
$F(X_1,X_2,\cdots,X_\tau)$ be a reduced
polynomial in  $\mathbb{F}_2[X_1,X_2,\cdots,X_\tau]$. If
$\Tr_1^{m}(u_i^{(2)}\pi^{-1}(u_j^{(1)})+
u_j^{(2)}\pi^{-1}(u_i^{(1)}))=0$ for any  $1\leq i<j\leq \tau$
Then the function
$$
f(x,y)=\Tr_1^m(x\pi(y))+\Tr_1^m(by)+
F(\Tr_1^{m}(u_1^{(1)}x+
u_1^{(2)}y),\Tr_1^{m}(u_2^{(1)}x+
u_2^{(2)}y)
,\cdots, \Tr_1^{m}(u_\tau^{(1)}x+
u_\tau^{(2)}y))
$$
is bent.
\end{theorem}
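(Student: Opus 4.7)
The plan is to follow the same template used for Theorems \ref{at1}, \ref{bt1}, \ref{ct1}: combine the bivariate analogue of Equation (\ref{walshf}) with the known dual of the Maiorana-McFarland bent function, then exploit the linearity of $\pi$ to collapse the perturbation into a linear form in $\mathbf{w}$, and finally invoke the Fourier expansion (\ref{fourier}) of $F$ to conclude that $|\mathcal{W}_f(\beta_1,\beta_2)|=2^m$.

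More concretely, I would first record that the dual of $g(x,y)=\Tr_1^m(x\pi(y))+\Tr_1^m(by)$ is
\[
\widetilde{g}(x,y)=\Tr_1^m\bigl(y\pi^{-1}(x)\bigr)+\Tr_1^m\bigl(b\pi^{-1}(x)\bigr),
\]
and that the bivariate version of Equation (\ref{walshf}) gives
\[
\mathcal{W}_f(\beta_1,\beta_2)=2^m\sum_{\mathbf{w}\in\ftau} c_{\mathbf{w}}(-1)^{\widetilde{g}\bigl(\beta_1+\sum_i w_i u_i^{(1)},\,\beta_2+\sum_i w_i u_i^{(2)}\bigr)}.
\]
Since $\pi^{-1}$ is linear over $\mathbb{F}_{2^m}$, I can expand $\pi^{-1}(\beta_1+\sum_i w_iu_i^{(1)})=\pi^{-1}(\beta_1)+\sum_i w_i\pi^{-1}(u_i^{(1)})$ and plug into the product $\Tr_1^m(\,\cdot\,\pi^{-1}(\,\cdot\,))$. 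The constant piece is $\widetilde{g}(\beta_1,\beta_2)$; the terms linear in a single $w_i$ yield $\Tr_1^m\bigl(\beta_2\pi^{-1}(u_i^{(1)})+\pi^{-1}(\beta_1)u_i^{(2)}+b\pi^{-1}(u_i^{(1)})\bigr)$; and the quadratic terms in $w_iw_j$ are the only potential obstruction.

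The key observation is how to treat the $w_iw_j$ contributions. The diagonal ones ($i=j$) collapse via $w_i^2=w_i$ into extra linear contributions $w_i\,\Tr_1^m(u_i^{(2)}\pi^{-1}(u_i^{(1)}))$ which I simply absorb into $X_i$. The off-diagonal ones, after pairing $(i,j)$ with $(j,i)$, contribute
\[
\sum_{1\leq i<j\leq \tau}w_iw_j\,\Tr_1^m\bigl(u_i^{(2)}\pi^{-1}(u_j^{(1)})+u_j^{(2)}\pi^{-1}(u_i^{(1)})\bigr),
\]
and this is precisely where the hypothesis is used: each such term vanishes. Consequently $\widetilde{g}(\beta_1+\sum_i w_iu_i^{(1)},\beta_2+\sum_i w_iu_i^{(2)})=\widetilde{g}(\beta_1,\beta_2)+\sum_{i=1}^{\tau}w_iX_i$ for the $X_i$ defined above.

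Substituting back, $\mathcal{W}_f(\beta_1,\beta_2)=2^m(-1)^{\widetilde{g}(\beta_1,\beta_2)}\sum_{\mathbf{w}}c_{\mathbf{w}}(-1)^{\sum_i w_iX_i}$, and by Equation (\ref{fourier}) the inner sum equals $(-1)^{F(X_1,\ldots,X_\tau)}$, giving $|\mathcal{W}_f(\beta_1,\beta_2)|=2^m$ and a dual of the form $\widetilde{f}(x,y)=\widetilde{g}(x,y)+F(X_1,\ldots,X_\tau)$. The main obstacle I anticipate is purely bookkeeping, namely correctly collecting the diagonal $w_i^2=w_i$ contributions from the bilinear product $\Tr_1^m((\sum_i w_iu_i^{(2)})\pi^{-1}(\sum_j w_ju_j^{(1)}))$ into the linear coefficient $X_i$ without double-counting; the genuinely quadratic part is then exactly the symmetric pairing on which the hypothesis is imposed, so the cancellation step itself is immediate.
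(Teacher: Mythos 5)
Your proposal is correct and follows essentially the same route as the paper: expand $\widetilde{g}(\beta_1+\sum_i w_iu_i^{(1)},\beta_2+\sum_i w_iu_i^{(2)})$ using the linearity of $\pi^{-1}$, absorb the diagonal $w_i^2=w_i$ terms into $X_i$ (your resulting $X_i$ matches the paper's $\Tr_1^m((\beta_2+b)\pi^{-1}(u_i^{(1)})+u_i^{(2)}\pi^{-1}(\beta_1)+u_i^{(2)}\pi^{-1}(u_i^{(1)}))$ exactly), kill the off-diagonal cross terms with the hypothesis, and finish with Equations (\ref{walshf}) and (\ref{fourier}). If anything, you are more explicit than the paper about the diagonal bookkeeping, which the paper states without justification.
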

\begin{proof}
Note that
$$
\widetilde{g}(\beta_1+
\sum_{i=1}^{\tau}w_iu_i^{(1)},
\beta_2+
\sum_{i=1}^{\tau}w_iu_i^{(2)})
=\widetilde{g}(\beta_1,\beta_2)
+\sum_{i=1}^{\tau}w_iX_i
+\sum_{1\leq i< j\leq \tau}
w_iw_j\Tr_1^m(u_i^{(2)}\tau^{-1}(u_j^{(1)})
+u_j^{(2)}\tau^{-1}(u_i^{(1)})),
$$
where $X_i=
\Tr_1^m((\beta_2+b)
\pi^{-1}(u_i^{(1)})+u_i^{(2)}
\tau^{-1}(\beta_1)
+u_i^{(2)}\tau^{-1}(u_i^{(1)}))$.
From $\Tr_1^{m}(u_i^{(2)}\tau^{-1}(u_j^{(1)})
+u_j^{(2)}\tau^{-1}(u_i^{(1)}))=0
$ and Equation (\ref{walshf}), we have
$$
\mathcal{W}_f(\beta)
=2^m(-1)^{\widetilde{g}
(\beta_1,\beta_2)}\sum_{\mathbf{w}\in \mathbb{F}_2^{\tau}}
c_{\mathbf{w}}(-1)^{\sum_{i=1}^{
\tau}w_iX_i}.
$$
From Equation (\ref{fourier}),
$$
\mathcal{W}_f(\beta)
=2^{m}(-1)^{\widetilde{g}(\beta_1,\beta_2)
+F(X_1,X_2,\cdots, X_\tau)}.
$$
Hence, $f(x)$ is bent and its dual is
$$
\widetilde{f}(x,y)
=\Tr_1^m(y\pi^{-1}(x))+
\Tr_1^m(b\pi^{-1}(x))
+F(X_1,X_2,\cdots,X_\tau),
$$
where
$X_i=
\Tr_1^m((y+b)
\pi^{-1}(u_i^{(1)})+u_i^{(2)}
\tau^{-1}(x)
+u_i^{(2)}\tau^{-1}(u_i^{(1)}))$.
\end{proof}
\begin{theorem}
Let $n=2m$, $s$ be a divisor of $m$ such that $m\over s$ is odd, $\tau$ be a positive integer
such that $1\leq \tau \leq m$,  $u_i=(u_i^{(1)},u_i^{(2)})
\in \mathbb{F}_{2^s}\times
\mathbb{F}_{2^s}~~(1\leq i\leq \tau)$ be linearly independent over $\mathbb{F}_2$, and
$F(X_1,X_2,\cdots,X_\tau)$ be a reduced
polynomial in  $\mathbb{F}_2[X_1,X_2,\cdots,X_\tau]$,
where for any $1\leq i,j\leq \tau$,
$u_i^{(1)}u_j^{(2)}+
u_j^{(1)}u_i^{(2)}=0$
and
$\Tr_1^{m}((u_i^{(1)})^2u_j^{(2)}+
u_i^{(2)}(u_j^{(1)})^2=0$.
Then the function
$$
f(x,y)=\Tr_1^m(xy^d)+
F(\Tr_1^{m}(u_1^{(1)}x+
u_1^{(2)}y),\Tr_1^{m}(u_2^{(1)}x+
u_2^{(2)}y)
,\cdots, \Tr_1^{m}(u_\tau^{(1)}x+
u_\tau^{(2)}y))
$$
is bent,
where $d(2^s+1)\equiv 1 \mod 2^m-1$.
\end{theorem}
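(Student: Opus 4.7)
The plan is to recognize $g(x,y)=\Tr_1^m(xy^d)$ as a Maiorana--McFarland bent function with permutation $\pi(y)=y^d$ and $h=0$, apply the general Walsh identity (\ref{walshf}), and then exploit the placement of $u_i^{(1)},u_i^{(2)}$ in $\mathbb{F}_{2^s}$ to show that the $\mathbf{w}$-dependence in $\widetilde{g}$ evaluated at the shifted point is purely linear. Since $\gcd(2^s+1,2^m-1)=1$ when $m/s$ is odd, the Gold map $y\mapsto y^{2^s+1}$ is a permutation of $\mathbb{F}_{2^m}$, so its compositional inverse $y\mapsto y^d$ is also a permutation; the dual of $g$ from the previous section is therefore $\widetilde{g}(x,y)=\Tr_1^m(yx^{2^s+1})$.

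Write $B=\sum_{i=1}^{\tau}w_iu_i^{(1)}$ and $C=\sum_{i=1}^{\tau}w_iu_i^{(2)}$. The key algebraic fact is that $u_i^{(1)}\in\mathbb{F}_{2^s}$ gives $B^{2^s}=B$, so
$$(\beta_1+B)^{2^s+1}=\beta_1^{2^s+1}+(\beta_1^{2^s}+\beta_1)B+B^{2}.$$
Multiplying by $\beta_2+C$ and taking trace, I obtain $\widetilde{g}(\beta_1+B,\beta_2+C)$ as the sum of $\widetilde{g}(\beta_1,\beta_2)$ together with six other terms. The terms $\Tr_1^m(\beta_2(\beta_1^{2^s}+\beta_1)B)$ and $\Tr_1^m(CA^{2^s+1})$ are visibly linear in $\mathbf{w}$. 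The term $\Tr_1^m(\beta_2 B^{2})$ is also linear in $\mathbf{w}$ because, in characteristic $2$ with $w_i\in\mathbb{F}_2$, $B^{2}=\sum_i w_i(u_i^{(1)})^{2}$. The remaining two terms $\Tr_1^m(C(\beta_1^{2^s}+\beta_1)B)$ and $\Tr_1^m(CB^{2})$ are bilinear in $C,B$ and constitute the potential obstruction.

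The main obstacle, and the whole content of the two hypotheses, is forcing these last two terms to contain no $w_iw_j$ cross-contribution for $i\ne j$. Collecting the pair $(i,j)$ and $(j,i)$, the $w_iw_j$-coefficient in $\Tr_1^m(C(\beta_1^{2^s}+\beta_1)B)$ equals $\Tr_1^m\bigl((\beta_1^{2^s}+\beta_1)(u_i^{(2)}u_j^{(1)}+u_j^{(2)}u_i^{(1)})\bigr)$, which vanishes identically in $\beta_1$ by the hypothesis $u_i^{(1)}u_j^{(2)}+u_j^{(1)}u_i^{(2)}=0$; similarly the $w_iw_j$-coefficient in $\Tr_1^m(CB^{2})$ equals $\Tr_1^m\bigl((u_i^{(1)})^{2}u_j^{(2)}+(u_j^{(1)})^{2}u_i^{(2)}\bigr)$, which vanishes by the second hypothesis. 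The diagonal $i=j$ contributions collapse via $w_i^{2}=w_i$ into extra linear terms, which I absorb into coefficients $X_1,\dots,X_\tau$.

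After this reduction, $\widetilde{g}(\beta_1+B,\beta_2+C)=\widetilde{g}(\beta_1,\beta_2)+\sum_{i=1}^{\tau}w_iX_i$, so (\ref{walshf}) gives
$$\mathcal{W}_f(\beta_1,\beta_2)=2^m(-1)^{\widetilde{g}(\beta_1,\beta_2)}\sum_{\mathbf{w}\in\mathbb{F}_2^{\tau}}c_{\mathbf{w}}(-1)^{\sum_i w_iX_i}.$$
Applying the Fourier identity (\ref{fourier}) to the inner sum collapses it to $(-1)^{F(X_1,\ldots,X_\tau)}$, yielding $|\mathcal{W}_f(\beta_1,\beta_2)|=2^m$ and thereby proving the bentness of $f$, with dual $\widetilde{f}(x,y)=\Tr_1^m(yx^{2^s+1})+F(X_1,\ldots,X_\tau)$ where each $X_i$ is the explicit linear-affine function of $(x,y)$ read off from the expansion above.
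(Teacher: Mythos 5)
Your proposal is correct and follows essentially the same route as the paper: identify $\widetilde{g}(\beta_1,\beta_2)=\Tr_1^m(\beta_2\beta_1^{2^s+1})$ via the Maiorana--McFarland dual formula with $\pi^{-1}(x)=x^{2^s+1}$, expand $\widetilde{g}$ at the shifted point using $B^{2^s}=B$, observe that the two hypotheses kill exactly the $w_iw_j$ cross terms (from $\Tr_1^m(C(\beta_1^{2^s}+\beta_1)B)$ and $\Tr_1^m(CB^2)$ respectively), and collapse the resulting sum with Equations (\ref{walshf}) and (\ref{fourier}). Your write-up is in fact slightly more explicit than the paper's about which hypothesis eliminates which obstruction; the only blemishes are cosmetic (a miscount of the number of expansion terms and a stray symbol $A$ for $\beta_1$).
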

\begin{proof}
Take $\pi(y)=y^d$. From $d(2^s+1)\equiv 1\mod 2^m-1$,
$$
\widetilde{g}(\beta_1,\beta_2)
=\Tr_1^m(\beta_2\beta_1^{2^{s}+1}).
$$
From $u_i^{(1)}, u_i^{(2)}
\in \mathbb{F}_{2^s}$,
\begin{align*}
\widetilde{g}(\beta_1+
\sum_{i=1}^{\tau}w_iu_i^{(1)},
\beta_2+
\sum_{i=1}^{\tau}w_iu_i^{(2)})
=&\Tr_1^m(\beta_2\beta_1^{2^{s}+1})
+\sum_{i=1}^{\tau}w_iX_i\\
&+\sum_{1\leq i<  j\leq \tau}
w_iw_j\Tr_1^m((\beta_1^{2^s}+\beta_1)(
u_i^{(2)}u_j^{(1)}
+u_j^{(2)}u_i^{(1)})
+
u_i^{(2)}(u_j^{(1)})^2
+u_j^{(2)}(u_i^{(1)})^2),
\end{align*}
where $X_i=
\Tr_1^m(\beta_1^{2^s+1}u_i^{(2)}+
\beta_2(\beta_1^{2^s}u_i^{(1)}+\beta_1 u_i^{(1)}+(u_i^{(1)})^2)
+u_i^{(2)}(\beta_1^{2^{s}}u_i^{(1)}
+\beta_1u_i^{(1)}+(u_i^{(1)})^2))$.
From $u_i^{(2)}u_j^{(1)}
+u_j^{(2)}u_i^{(1)}=0$,  $
\Tr_1^m(u_i^{(2)}(u_j^{(1)})^2
+u_j^{(2)}(u_i^{(1)})^2)=0$,
and Equation (\ref{walshf}),
$$
\mathcal{W}_f(\beta)
=2^m(-1)^{\widetilde{g}
(\beta_1,\beta_2)}\sum_{\mathbf{w}\in \mathbb{F}_2^{\tau}}
c_{\mathbf{w}}(-1)^{\sum_{i=1}^{
\tau}w_iX_i}.
$$
From Equation (\ref{fourier}),
$$
\mathcal{W}_f(\beta)
=2^{m}(-1)^{\widetilde{g}(\beta_1,\beta_2)
+F(X_1,X_2,\cdots, X_\tau)}.
$$
Hence, $f(x)$ is bent.
\end{proof}

\section{Conclusion}
In this paper, we generalize previous work on the constructions of  infinite families of bent functions \cite{M2014,XCX2015} and
construct several infinite families of bent functions from known bent functions (the Kasami functions, quadratic idempotent bent functions, bent functions via Niho exponents, the Gold-like monomial functions, and the Maiorana-McFarland class of bent functions). These bent functions
contain some previous bent functions
by Mesnager \cite{M2014} and Xu et al.
\cite{XCX2015}. Further,
several   infinite families of
bent idempotents of any possible  algebraic degree
are obtained from any quadratic bent idempotent, which solves the open problem
proposed by Carlet \cite{C2014}.
And  an infinite family of anti-self-dual
bent functions are constructed. Take
three anti-self-dual bent functions
$f_1$, $f_2$, and $f_3$ in such a family, then $f_1+f_2+f_3$ belongs to this family, i.e., $f_1+f_2+f_3$ is anti-self-dual bent.
This solves the open problem proposed by
Mesnager \cite{M2014}.

\section*{Acknowledgment}
This work was supported by
the National Natural Science Foundation of China
(Grant No. 11401480, No.10990011 \& No. 61272499).
Yanfeng Qi also acknowledges support from
KSY075614050 of Hangzhou Dianzi University.


\ifCLASSOPTIONcaptionsoff
  \newpage
\fi

\end{document}